\numberwithin{equation}{section}
\numberwithin{figure}{section}
\newcommand{\One}{\mathbbm{1}}
\newtheorem{thm}{Theorem}[section]
\newtheorem{lem}[thm]{Lemma}
\newtheorem{Def}{Definition}[section]
\newtheorem{prop}[thm]{Proposition}
\newtheorem{rem}[Def]{Remark}
\newtheorem{cor}[thm]{Corollary}
\newtheorem{ques}[thm]{Question}
\newcommand\bbR{{\mathbb R}}
\newcommand\bbZ{{\mathbb{Z}}}
\renewcommand\S{\Sigma}
\renewcommand\d{\partial}
\newcommand\e{\epsilon}
\renewcommand\t{\tau}
\renewcommand\th{\theta}
\newcommand{\tr}{\operatorname{tr}}
\newcommand\beq{\begin{eqnarray}}
\newcommand\eeq{\end{eqnarray}}
\newcommand\ben{\begin{enumerate}}
\newcommand\een{\end{enumerate}}
\newcommand\bit{\begin{itemize}}
\newcommand\eit{\end{itemize}}
\newcounter{mnotecount}[section]
\begin{document}

\title[On the geometry and topology of initial data sets]
{On the geometry and topology of initial data sets with horizons}

\author[L. Andersson]{Lars Andersson}
\address{Max-Planck-Institut f{\"u}r Gravitationsphysik \\
Albert-Einstein-Institut \\
Am M\"uhlenberg 1 \\
14476 Golm \\
Germany}
\email{laan@aei.mpg.de}

\author[M. Dahl]{Mattias Dahl}
\address{Institutionen f\"or Matematik \\
Kungliga Tekniska H\"ogskolan \\
100 44 Stockholm \\
Sweden} 
\email{dahl@math.kth.se}

\author[G. J. Galloway]{Gregory J. Galloway}
\address {University of Miami \\ 
Department of Mathematics \\ 
Coral Gables, FL 33124 \\ 
U.S.A.}
\email {galloway@math.miami.edu}

\author[D. Pollack]{Daniel Pollack}
\address {University of Washington \\ 
Department of Mathematics \\ 
Seattle, WA 98195 \\U.S.A.} 
\email {pollack@math.washington.edu}

\begin{abstract}
We study the relationship between initial data sets with horizons and
the existence of metrics of positive scalar curvature. We define a
Cauchy Domain of Outer Communications (CDOC) to be an asymptotically
flat initial set $(M, g, K)$ such that the boundary $\partial M$ of
$M$ is a collection of Marginally Outer (or Inner) Trapped Surfaces
(MOTSs and/or MITSs) and such that $M\setminus \partial M$ contains no
MOTSs or MITSs.  This definition is meant to capture, on the level of
the initial data sets, the well known notion of the domain of outer
communications (DOC) as the region of spacetime outside of all the
black holes (and white holes). Our main theorem establishes that in
dimensions $3\leq n \leq 7$, a CDOC which satisfies the dominant
energy condition and has a strictly stable boundary has a positive
scalar curvature metric which smoothly compactifies the asymptotically
flat end and is a Riemannian product metric near the boundary where
the cross sectional metric is conformal to a small perturbation of the
initial metric on the boundary $\partial M$ induced by $g$. This
result may be viewed as a generalization of Galloway and Schoen's
higher dimensional black hole topology theorem \cite{GS06} to the
exterior of the horizon. We also show how this result leads to a
number of topological restrictions on the CDOC, which allows one to
also view this as an extension of the initial data topological
censorship theorem, established in \cite{EGP13} in dimension $n=3$, to
higher dimensions.
\end{abstract}


\maketitle

\section{Introduction}

One of the interesting features of general relativity is that it does
not a priori impose any restrictions on the topology of space. In
fact, as was shown in \cite{IMP}, given a compact manifold $M$ of
arbitrary topology and a point $p\in M$, there always exists an
asymptotically flat solution to the vacuum Einstein constraint
equations on $M\setminus \{p\}$. However, according to the principle
of topological censorship, the topology of the domain of outer
communications (DOC), that is the region outside of all black holes
(and white holes), should, in a certain sense, be simple. The
rationale for this is roughly as follows. Known results
\cite{Gannon,Lee} suggest that nontrivial topology tends to induce
gravitational collapse. In the standard collapse scenario, based on
the weak cosmic censorship conjecture, the process of gravitational
collapse leads to the formation of an event horizon which shields the
singularities from view. As a result, according to the viewpoint of
topological censorship, the nontrivial topology gets hidden behind the
event horizon, and hence the DOC should have simple topology. There
have been a number of results supporting this point of view, the most
basic of which establishes the simple connectivity of the DOC in
asymptotically flat spacetimes obeying suitable energy and causality
conditions \cite{FSW,Gdoc}. However, all the results alluded to here
are spacetime results, that is, they involve conditions that are
essentially global in time.

In \cite{EGP13} a result on topological censorship was obtained at the
pure initial data level for asymptotically flat initial data sets,
thereby circumventing difficult questions of global evolution; see
\cite[Theorem~5.1]{EGP13}. This result, which establishes, under
appropriate conditions, the topological simplicity of $3$-dimensional
asymptotically flat initial data sets with horizons, relies heavily on
deep results in low dimensional topology, in particular the resolution
of the Poincar\'e and the geometrization conjectures. The aim of the
present paper is to obtain some results of a similar spirit, but in
higher dimensions. While similar in spirit, the methods we employ here
are entirely different. Given an asymptotically flat initial data set
which satisfies the dominant energy condition and contains an inner
horizon (marginally outer trapped surface), we use Jang's equation
(\cite{SY2, AEM}), and other techniques, to deform the metric to one
of positive scalar curvature on the manifold obtained by compactifying
the end, such that the metric has a special structure near the
horizon. As we shall discuss, one can then use known obstructions to
the existence of such positive scalar curvature metrics to obtain
restrictions on the topology of the original initial data
manifold. Related approaches to the topology of asymptotically flat
initial data sets {\it without} horizons have been considered in
\cite{SYPRL, SchWItt}. Here we must overcome a number of difficulties
due to the presence of a horizon.

An initial data set $(M,g,K)$ for Einstein's equations consists of an 
$n$-dimensional manifold $M$, a Riemannian metric $g$ on $M$, and a 
symmetric $2$-tensor $K$ on $M$. The energy density $\mu$ and the 
momentum density $J$ of $(M,g,K)$ are computed through
\[
2\mu = R^g - |K|_g^2 + (\tr^g K)^2
\quad \text{and} \quad
J = \operatorname{div}^g K - d(\tr^g K).
\]
The initial data set $(M,g,K)$ satisfies the dominant energy condition
if 
\[
\mu \geq |J| .
\]

Let $\Sigma$ be a compact $2$-sided hypersurface in an initial data
set $(M,g,K)$. Then $\S$ admits a smooth unit normal field $\nu$ in
$M$. By convention, refer to such a choice as outward pointing. Then
the outgoing and ingoing null expansion scalars $\th_{\pm}$ are
defined in terms of the initial data as $\th_{\pm} = P \pm H$, where
$P = \tr_\Sigma K$ is the partial trace of $K$ along $\S$ and $H$ is
the mean curvature of $\S$, which, by our conventions, is the
divergence of $\nu$ along $\S$. We call $\Sigma$ outer 
trapped if $\theta_+ < 0$ on $\Sigma$, 
while if $\theta_- < 0$, $\Sigma$ is inner
trapped. We call $\Sigma$ a marginally outer trapped surface (MOTS)
if $\theta_+ = 0$, while if $\theta_- = 0$, we call $\Sigma$ a
marginally inner trapped surface (MITS). The distinction between MOTS
and MITS is only meaningful when a choice, natural or otherwise, has
been made between the notions of ``outside'' and ``inside''.

Galloway and Schoen have proved the following extension of Hawking's 
black hole topology theorem to higher dimensions. 

\begin{thm}[\cite{GS06}] \label{GS-thm}
Let $(M, g, K)$ be an $n$-dimensional, $n \geq 3$ initial data set
satisfying the dominant energy condition. If $\Sigma$ is a stable
MOTS, in particular if $\S$ is outermost, then, apart from
certain exceptional circumstances, $\Sigma$ is of positive Yamabe
type.
\end{thm}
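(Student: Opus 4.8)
The plan is to prove Theorem~\ref{GS-thm} via the MOTS stability inequality, reducing the statement to a spectral estimate for the conformal Laplacian of $\Sigma$. First I would recall the \emph{MOTS stability operator}: if $\Sigma$ is a two-sided MOTS with chosen outward unit normal $\nu$, the variation of $\theta_+$ under a normal variation $\psi\nu$ has the form $\delta_{\psi\nu}\theta_+ = L_\Sigma\psi$, where
\[
L_\Sigma\psi = -\Delta_\Sigma\psi + 2\langle X,\nabla\psi\rangle + \Big(Q + \div_\Sigma X - |X|^2\Big)\psi,
\qquad
Q = \tfrac12 R_\Sigma - \big(\mu + J(\nu)\big) - \tfrac12|\chi|^2 .
\]
Here $X$ is the vector field on $\Sigma$ dual to $K(\nu,\cdot)|_{T\Sigma}$, $R_\Sigma$ is the scalar curvature of the induced metric, and $\chi$ is the null second fundamental form of $\Sigma$ in the outgoing null direction, trace-free because $\theta_+ = 0$. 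The operator $L_\Sigma$ is elliptic but, owing to the drift term $2\langle X,\nabla\,\cdot\,\rangle$, not self-adjoint; nonetheless its principal eigenvalue $\lambda_1(L_\Sigma)$ is real and is attained by a positive eigenfunction $\phi > 0$, and \emph{stability} of $\Sigma$ is by definition the condition $\lambda_1(L_\Sigma)\ge 0$, which holds in particular when $\Sigma$ is outermost.

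The heart of the argument is to remove the drift by conjugating with $\phi$. Writing $w = \log\phi$ and dividing $L_\Sigma\phi = \lambda_1(L_\Sigma)\phi$ by $\phi$ yields an expression for $Q$ in terms of $w$, $X$, and $\lambda_1(L_\Sigma)$; substituting it into a Dirichlet-type integral and integrating by parts, I would obtain, for every $f\in C^\infty(\Sigma)$,
\[
\int_\Sigma\!\Big(|\nabla f|^2 + \tfrac12 R_\Sigma f^2\Big)
= \int_\Sigma\!\Big(\mu + J(\nu) + \tfrac12|\chi|^2\Big)f^2
+ \int_\Sigma\big|\nabla f - f\nabla w + fX\big|^2
+ \lambda_1(L_\Sigma)\int_\Sigma f^2 ,
\]
the point being that the cross terms generated by $2\langle X,\nabla\,\cdot\,\rangle$, $\div_\Sigma X$ and $\Delta_\Sigma\phi/\phi$ assemble into the perfect square $|\nabla f - f\nabla w + fX|^2$. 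Every term on the right is nonnegative: the dominant energy condition gives $\mu + J(\nu)\ge \mu - |J|\ge 0$, the middle integral is manifestly nonnegative, and $\lambda_1(L_\Sigma)\ge 0$ by stability. Hence $\int_\Sigma\big(2|\nabla f|^2 + R_\Sigma f^2\big)\ge 0$ for all $f$.

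It remains to translate this into a statement about the Yamabe type of the $(n-1)$-manifold $\Sigma$. For $n\ge 4$ its conformal Laplacian is $-c_{n-1}\Delta_\Sigma + R_\Sigma$ with $c_{n-1} = \tfrac{4(n-2)}{n-3} > 2$, so testing against a first eigenfunction $f$ of the conformal Laplacian,
\[
\lambda_1^{\mathrm{Yam}}(\Sigma)\int_\Sigma f^2
= \int_\Sigma\big(c_{n-1}|\nabla f|^2 + R_\Sigma f^2\big)
\ge \int_\Sigma\big(2|\nabla f|^2 + R_\Sigma f^2\big)\ge 0 ,
\]
so $\Sigma$ has nonnegative Yamabe type, and positive Yamabe type unless equality holds throughout. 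In the equality case $c_{n-1}>2$ forces $\nabla f\equiv 0$, hence $f$ is constant and $R_\Sigma\equiv 0$, and the previous identity then forces $\lambda_1(L_\Sigma) = 0$, $\mu + J(\nu)\equiv 0$, $\chi\equiv 0$, and $X = \nabla w$; a finer argument (as in \cite{GS06}) upgrades scalar-flatness to Ricci-flatness of the induced metric, and these are the exceptional circumstances in the statement. When $n = 3$, the same reasoning applied with $f\equiv 1$ gives $\int_\Sigma R_\Sigma\ge 0$, so $\Sigma$ is a $2$-sphere unless it is a flat $2$-torus with $\mu + J(\nu)\equiv 0$ and $\chi\equiv 0$.

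I expect the main obstacle to be the non-self-adjointness of $L_\Sigma$: there is no Rayleigh-quotient description of $\lambda_1(L_\Sigma)$, so one cannot write down a ``second variation $\ge 0$'' inequality valid for all test functions directly, and it is the conjugation by the positive principal eigenfunction $\phi$ --- together with the algebraic fact that the first-order terms complete a square --- that makes the estimate work. A secondary technical point is the rigidity analysis needed to pin down the exceptional (zero Yamabe) case.
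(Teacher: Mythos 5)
Your proposal is correct, and it is essentially the standard Galloway--Schoen argument: the paper itself states Theorem~\ref{GS-thm} as a quoted result from \cite{GS06} without proof, and your route --- conjugating the non-self-adjoint MOTS stability operator by its positive principal eigenfunction, completing the square to get the symmetrized stability identity, invoking the dominant energy condition and $\lambda_1\ge 0$, and comparing the coefficient $2$ with the conformal Laplacian coefficient $\tfrac{4(n-2)}{n-3}>2$ (with the $n=3$ case via Gauss--Bonnet) --- is exactly the proof in that reference, including the identification of the borderline (Yamabe-null) case as the ``exceptional circumstances.''
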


In the situation of this theorem, let $h$ denote the induced metric on
$\Sigma$. The conclusion is then that $h$ is conformal to a metric of
positive scalar curvature. The ``exceptional circumstances'' can be ruled
out in various ways \cite{GS06, G08}, in particular if $\S$ is assumed
to be {\it strictly} stable. See \cite{AEM} and references therein for
the notion of MOTS stability.

The main result of the present paper is an extension of the above
theorem for asymptotically flat initial data sets, stating that the
positive scalar curvature metric on $\Sigma$ can be extended to the
one-point compactification of the asymptotically flat manifold
consisting of the exterior of $\Sigma$. 

\begin{Def} \label{def:CDOC} 
A Cauchy Domain of Outer Communications (CDOC) is an asymptotically
flat initial set $(M, g, K)$ such that the boundary of $M$, $\partial
M$ is a collection of MOTSs and/or MITSs and such that
$M\setminus \partial M$ contains no MOTSs or MITSs.
\end{Def}

The precise form of asymptotic flatness we require is given in
Definition~\ref{def:AF}.

Definition \ref{def:CDOC} is meant to capture, strictly on the level
of initial data sets, the well known notion of the DOC as the region
of spacetime outside of all the black holes (and white holes). More
precisely, it is meant to model an asymptotically flat (partial)
Cauchy surface within the DOC, with boundary on the event horizon (in
the equilibrium case) or perhaps somewhat inside the event horizon (in
the dynamic case). Our main theorem, as noted above, establishes the
existence of a particular type of positive scalar curvature metric on
a CDOC.
 
\begin{thm} \label{mainthm}
Let $(M, g, K)$ be an $n$-dimensional, $3 \leq n \leq 7$, CDOC whose
boundary $\S = \partial M$ is connected and is a strictly stable
MOTS. Suppose further that the initial data $(g,K)$ extends to a
slightly larger manifold $N$ (which contains $M$ and a collar
neighborhood of $\S$) such that the dominant energy condition (DEC)
holds on $N$, $\mu \ge |J|$.

Let $\check{M}$ denote $M$ with the asymptotically flat end 
compactified by a point. Let $h$ denote the metric on $\partial M$ 
induced from $g$. Then $\check{M}$ admits a positive scalar 
curvature metric $\check{g}$ 
\begin{enumerate}
\item
whose induced metric on the boundary $\partial M$ is conformal to a small
perturbation of $h$, and
\item 
is a Riemannian product metric in a collar neighborhood of $\partial M$.
\end{enumerate}
\end{thm}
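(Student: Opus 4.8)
The plan is to combine Jang's equation, blown up at the horizon $\Sigma=\partial M$, with the Schoen--Yau conformal deformation, and then to graft in a product metric near $\Sigma$ before conformally compactifying the asymptotically flat end. \emph{Step 1 (Jang's equation with blow-up at $\Sigma$).} Using the extension of $(g,K)$ to the collar in $N$, together with the fact that $\Sigma$ is a strictly stable MOTS, I would solve Jang's equation on $\operatorname{int}(M)$ (see \cite{SY2, AEM}) to obtain a solution $f$ which is regular and asymptotically flat over the exterior end of $M$ and blows up to $+\infty$ precisely along $\Sigma$ --- the behaviour forced by $\theta_+=0$ --- with strict stability ensuring that the Jang graph $\tilde M=\operatorname{graph}(f)$, with induced metric $\bar g$, has a cylindrical end converging exponentially to a product metric of the form $dt^2+h$ over $\Sigma$. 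The restriction $3\le n\le 7$ enters through the regularity theory for Jang's equation and for MOTS, as in \cite{SY2, AEM, GS06}.

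\emph{Step 2 (almost-nonnegative scalar curvature and conformal deformation).} Next I would invoke the Schoen--Yau identity for the scalar curvature of the Jang metric, $\bar R = 2(\mu - J(w)) + |A-\bar K|^2 + 2|q|^2 - 2\operatorname{div}_{\bar g}q$, where $w$ is the tilt vector field of the graph with $|w|_g\le 1$, $A$ is the second fundamental form of the graph, and $q$ is the associated $1$-form. The DEC on $N$ gives $\mu - J(w) \ge \mu - |J|\,|w| \ge \mu - |J| \ge 0$, hence $\bar R \ge 2|q|^2 - 2\operatorname{div}_{\bar g}q$. Exactly as in Schoen--Yau and \cite{GS06}, this makes the conformal Laplacian $\bar L = -\tfrac{4(n-1)}{n-2}\Delta_{\bar g} + \bar R$ a positive operator: completing the square in $\int(c_n|\nabla u|^2 + \bar R u^2)$, with $c_n = \tfrac{4(n-1)}{n-2} > 2$ for $n\ge 3$, leaves $(c_n-2)\int|\nabla u|^2 \ge 0$. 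Solving the conformal Laplacian equation with an appropriate right-hand side and asymptotics ($u\to 1$ on the asymptotically flat end and $u$ bounded above and below on the cylindrical end) then yields a metric $\hat g = u^{4/(n-2)}\bar g$ with $\hat R > 0$; the rigid possibility (which would force $\mu = |J|$, $A = \bar K$, $q\equiv 0$) is excluded by strict stability of $\Sigma$ through the MOTS stability operator, precisely as the exceptional case of Theorem~\ref{GS-thm} is ruled out in \cite{GS06, G08}. On the cylindrical end $\hat g$ remains a controlled conformal perturbation of $dt^2+h$.

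\emph{Step 3 (product collar and compactification).} By Theorem~\ref{GS-thm}, with the exceptional case again excluded by strict stability, $(\Sigma,h)$ is of positive Yamabe type, so some metric $\tilde h$ conformal to a small perturbation of $h$ has $R_{\tilde h} > 0$; since $\dim\Sigma = n-1 \ge 2$ --- and for $n=3$, $\Sigma\cong S^2$ by Gauss--Bonnet --- the product $dt^2+\tilde h$ has positive scalar curvature. Truncating the cylindrical end of $(\tilde M,\hat g)$ at a large value of $t$ and grafting in $dt^2+\tilde h$ across a long neck by a Gromov--Lawson-type bending argument, both pieces having positive scalar curvature, produces a metric on a manifold diffeomorphic to $M$ that equals $dt^2+\tilde h$ on a collar of $\partial M$ and has positive scalar curvature everywhere. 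Finally, conformally changing this metric by a factor equal to $1$ near $\partial M$ and behaving like $r^{-(n-2)}$ at the asymptotically flat infinity compactifies that end, giving a metric on $\check M$ that is smooth across the added point and has positive scalar curvature outside a compact transition region; a further conformal change supported in that region --- with conformal factor $1$ near $\partial M$ and near the added point --- restores $\check R > 0$ everywhere by a standard perturbation argument, while leaving the product collar intact. This produces the required metric $\check g$, establishing (1) and (2).

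\emph{Main obstacle.} The crux is the conformal deformation of Step 2 carried out on the noncompact Jang graph with its cylindrical end: controlling the asymptotics of the conformal factor on both ends so that the cross-sectional geometry near the cylindrical end remains a controlled conformal perturbation of $h$, and deploying \emph{strict} stability of $\Sigma$ to eliminate the rigid configuration. The blow-up construction for Jang's equation at the MOTS boundary, with the regularity available only for $n\le 7$, is the other technical heart of the argument; the grafting and the conformal compactification of the asymptotically flat end are then comparatively standard.
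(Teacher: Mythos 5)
There are genuine gaps, and they sit exactly where the paper has to work hardest. In your Step 1 you assert that the Jang solution is regular on the exterior and blows up precisely along $\Sigma$, attributing this to $\theta_+=0$ and strict stability. But Jang solutions can blow up (to $+\infty$ or $-\infty$) over any MOTS or MITS in the interior, so you must invoke the CDOC hypothesis that $M\setminus\partial M$ contains no MOTSs or MITSs --- a hypothesis your argument never uses --- and even granting it, one must still rule out a MITS component of the blow-up set that touches $\Sigma$; in the paper this requires a delicate argument with the uniform $C$-minimizing property of the capillarity-regularized Jang graphs (proof of Theorem~\ref{thmstrictDEC}). Second, and more seriously, your Step 2 runs the Schoen--Yau conformal argument with only the non-strict DEC, which gives merely nonnegativity of the conformal Laplacian, $\int\bigl(a_n|\nabla u|^2+\bar R u^2\bigr)\geq (a_n-2)\int|\nabla u|^2\geq 0$, on a noncompact manifold with both an asymptotically flat end and a cylindrical end. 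Nonnegativity is not enough: to obtain a conformal factor with controlled asymptotics on both ends, to make the cross-sectional operator $-a_n\Delta^h+R^h$ on $\Sigma$ strictly positive (which is what ultimately produces a product collar whose cross-section is conformal to a perturbation of $h$), and to absorb the negative scalar curvature created when the AF end is flattened and compactified, one needs a strictly positive lower bound on the right-hand side of the stability-type inequality. Appealing to Galloway--Schoen rigidity ``excluded by strict stability'' is a statement about the stability operator on the compact cross-section $\Sigma$; it does not supply a positive lower bound for the conformal Laplacian on the whole Jang graph. The paper's essential move --- and the actual place where strict stability enters --- is to first perturb the initial data (via the density theorem of \cite{EHLS}, with an outer-trapped collar built from strict stability serving as barrier) so that $E=\mu-|J|>0$ everywhere, and then to propagate this strict positivity through a cut-off-to-flat deformation plus Hardy inequality, a conformal compactification, an eigenfunction conformal change on the cylinder, and a final solvability argument. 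Your proposal has no substitute for this mechanism.

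Two further steps are also not justified as stated. Your grafting of $dt^2+\tilde h$ onto the truncated Jang metric by a ``Gromov--Lawson-type bending'' needs the nearby cross-sectional metrics of the deformed Jang metric to themselves have positive scalar curvature (or a genuine isotopy/concordance argument with quantitative control); after your Step 2 the cross-sections are conformal perturbations of $h$ with no scalar curvature positivity established, so PSC of the two pieces separately does not give a PSC interpolation. And your conformal compactification of the AF end is only smooth and PSC at the added point if the metric has first been made exactly flat (or conformally flat) outside a compact set; arranging this costs a negative scalar curvature error in an annulus, which in the paper is exactly what the Hardy inequality and the strict bound $E>0$ are there to absorb, and your proposed fix by a conformal factor equal to $1$ on the boundary of the transition region is again a positivity (spectral) assertion that has not been established.
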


\begin{rem} 
Theorem \ref{mainthm} is stated for simplicity for a CDOC with a
single outermost MOTS $\Sigma$. The analogous statement for the case
where $\Sigma$ is a collection of outermost MOTSs and MITSs can easily
be proved along the same lines. 
\end{rem} 

\begin{rem} 
The proof in the $n=3$ case of Theorem~\ref{mainthm} requires a
modification from the general case when $n>3$. This is addressed in
Remarks~\ref{rem:n=3a} and \ref{rem:n=3}. The restriction that $n\leq
7$ in Theorem~\ref{mainthm} is the result of our use of existence
results for smooth solutions of Jang's equation (see in particular
Theorems~\ref{thmstrictDEC} and \ref{thmexist} below). This
restriction is closely related to the partial regularity imposed in
higher dimensions by the existence of the Simons cone, a singular area
minimizing hypersurface in ${\bbR}^8$.
\end{rem}

The existence of a positive scalar curvature metric on the
compactification of $M$ (and its double, which follows immediately
from the product structure near the boundary) gives restrictions on
the topology of $M$. In Section~\ref{sec_PSC} we discuss such
restrictions in more detail.

The black ring spacetime of Emparan and Reall \cite{ER}, which is an
asymptotically flat, stationary solution to the vacuum Einstein
equations, illustrates certain features of our results. Let $M$ be the
closure of a Cauchy surface for the domain of outer communications of
the black ring. The boundary of $M$ coincides with the bifurcate
horizon, which has topology $S^2 \times S^1$.  
Further, as shown in
\cite{Chru} (see also \cite{Kunduri}), the compactification $\check{M}$ of $M$ has
topology $S^2 \times D^2$. 
This is consistent with
Theorem~\ref{mainthm}, as well as standard results on topological
censorship, which require the domain of outer communications to be
simply connected. Note that, while $\pi_1(\check{M})$ is trivial,
$H_2(\check{M},\bbZ) = \bbZ \ne 0$.

An essential part of the argument is to show that we can specialize to 
the case in which dominant energy condition holds strictly, $\mu > |J|$. 
This involves a perturbation of the initial data, as discussed in 
Section~\ref{sec_DEF}. It is here that we need the assumption that 
$\S$ is {\em strictly} stable. 

This paper is a contribution to the long history of results tying the
existence of metrics of positive scalar curvature to the analysis of
initial data sets in general relativity. One of the earliest and most
important examples of this is the transition from Schoen and Yau's
work on topological obstructions to positive scalar curvature metrics
\cite{SYscalar, SYscalar2} to their proof, using minimal
hypersurfaces, of the positive mass theorem \cite{SY1, SY2}. The
results here, like Theorem \ref{GS-thm}, make strong ties between the
dominant energy condition, the presence of marginally trapped surfaces
and metrics of positive scalar curvature.

\subsection*{Acknowledgements}

The authors wish to thank Michael Eichmair, Lan-Hsuan Huang and Anna
Sakovich for many helpful comments on the work in this paper.

\section{Deforming to strict dominant energy condition}
\label{sec_DEF}

We start by introducing an appropriate notion of asymptotically flat
initial data. We shall use the conventions of \cite{EHLS} for function
spaces, which agree with the conventions of \cite{bartnik:mass}. All
sections of bundles are assumed to be smooth unless otherwise
stated. The following definition is an adaption of
\cite[Definition~3]{EHLS} to our situation.

\begin{Def} \label{def:AF} 
Let $(M,g,K)$ be an initial data set of dimension $n \geq 3$. 
Let $k$ be an integer, $k \geq 3$. Further, let $p > n$, 
$q \in ((n-2)/2, n-2)$, $q_0 > 0$, $\alpha \in (0,1-n/p]$. We say that 
$(M,g,K)$ is \emph{asymptotically flat} (of type $k,p,q,q_0,\alpha$) 
if there is a compact set $Y \subset M$ and a $C^{k+1,\alpha}$ 
diffeomorphism identifying $M\setminus Y$ with $\bbR^n \setminus B$ for 
some closed ball $B \subset \bbR^n$, for which 
\begin{equation}\label{eq:gK-reg}
(g - \delta , K) \in 
W^{k,p}_{-q}(\bbR^n \setminus B) \times W^{k-1,p}_{-q-1}(\bbR^n\setminus B)
\end{equation}
where $\delta$ is the standard flat metric on $\bbR^n$, and 
\begin{equation}\label{eq:muJ-reg}
(\mu, J) \in C^{k-2,\alpha}_{-n-q_0} \,.
\end{equation} 
\end{Def} 

In order to avoid certain technical problems our definition of asymptotic 
flatness differs from that of \cite{EHLS} by assuming higher regularity. 
Note that our assumptions imply pointwise estimates for two derivatives 
of $g - \delta$ which is not valid under the assumptions of \cite{EHLS}. 
Note also that the condition \eqref{eq:muJ-reg} which is adapted from 
\cite[Definition~3]{EHLS} implies additional fall-off for $(\mu,J)$ 
over that implied by \eqref{eq:gK-reg}. 

We shall make use of the weighted Sobolev and H\"older spaces in the setting 
of manifolds with boundary. Definition \ref{def:AF} extends immediately 
to this situation. 

The aim of this section is to establish the following perturbation result. 

\begin{thm} \label{thmstrictDEC} 
Let $(M, g, K)$ be an 
initial data set of dimension $n$, $4 \le n \le 7$, 
which is asymptotically flat in the sense of Definition~\ref{def:AF} and 
such that $M$ is a manifold with boundary, whose boundary $\S = \partial M$ 
is a connected strictly stable MOTS. Suppose further:
\begin{enumerate}
\item[(i)] 
The initial data $(g,K)$ extends to a slightly larger manifold $N$ (which 
contains $M$ and a collar neighborhood of $\S$) such that the dominant 
energy condition (DEC) holds on $N$, $\mu \ge |J|$.
\item[(ii)] There are no MOTSs or MITSs in $M \setminus \S$.
\end{enumerate}
Then for $\epsilon > 0$ there is an asymptotically flat initial data 
$(\hat g, \hat K)$ such that
\[
\| \hat{g} - g \|_{W^{k,p}_{-q}} + \| \hat{K} - K\|_{W^{k-1,p}_{-1-q}} < \epsilon 
\]
and a manifold $\hat M \subset N$ diffeomorphic to $M$, with MOTS
boundary $\hat \S = \d \hat M$, which is a small (in $C^{2,\alpha}$)
perturbation of $\S$, such that the following statements hold.
\begin{enumerate}
\item \label{point:jang1} 
The dominant energy condition holds strictly on $(\hat M, \hat g, \hat K)$, 
that is
\[
\hat \mu > |\hat J|. 
\]
\item \label{point:jang2} 
There exists a smooth solution to Jang's equation 
$u : \hat M \setminus \hat \S \to \bbR$, such that 
\begin{equation} \label{eq:jang-decay} 
u \in W^{k+1,p}_{1-q}
\end{equation}
and $u \to \infty$ on approach to $\hat \S$. 
\end{enumerate}
\end{thm}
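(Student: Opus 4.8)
The plan is to establish the two conclusions in turn. First one produces the small perturbation $(\hat g,\hat K)$, together with the new horizon $\hat\S$ and the region $\hat M$ it bounds, arranging that the strict dominant energy condition holds and that $(\hat M,\hat g,\hat K)$ still satisfies the structural hypotheses of the theorem; this is the substantive step, and it is where \emph{strict} stability of $\S$ is used. Given that, the existence of the Jang solution blowing up on $\hat\S$ is essentially the content of Theorem~\ref{thmexist} below.

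For the perturbation I would work inside the extension $N$ and prescribe the constraints directly: look for $(\hat g,\hat K)$ close to $(g,K)$ in $W^{k,p}_{-q}\times W^{k-1,p}_{-q-1}$ with $\hat\mu=\mu+\psi$ and $\hat J=J$, where $\psi$ is small, strictly positive throughout the eventual region $\hat M$, and lies in the weighted H\"older space $C^{k-2,\alpha}_{-n-q_0}$ of Definition~\ref{def:AF}. Since the DEC $\mu\ge|J|$ holds on $N$, this forces $\hat\mu=\mu+\psi>|J|=|\hat J|$, which is conclusion~(\ref{point:jang1}); moreover $(\hat\mu,\hat J)=(\mu+\psi,J)$ automatically lies in $C^{k-2,\alpha}_{-n-q_0}$, so \eqref{eq:muJ-reg} is preserved, while \eqref{eq:gK-reg} holds because the perturbation is small in the stated norms. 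To realize such data one inverts the linearization of the constraint map $(g,K)\mapsto\big(R_g-|K|_g^2+(\tr_g K)^2,\ \div_g K-d\,\tr_g K\big)$: it admits a bounded right inverse into $W^{k,p}_{-q}\times W^{k-1,p}_{-q-1}$, and applying it to the prescribed increment $(2\psi,0)$ of the constraints produces the desired small perturbation $\hat g-g$, $\hat K-K$. The weight condition $q\in((n-2)/2,n-2)$ of Definition~\ref{def:AF} lies within the admissible range for this right inverse, and any Killing initial data in the (decaying) cokernel must vanish on an asymptotically flat end by unique continuation; all of this is standard in the weighted function-space setting of \cite{EHLS,bartnik:mass}, extended to manifolds with boundary, and one arranges that the perturbation is trivial near $\partial N$. (Alternatively the same effect could be produced by a conformal change of $g$ raising the scalar curvature together with a correction of $K$ that restores the decay of $J$.)

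Next I would locate the new horizon and verify the structural hypotheses. Because $\S$ is a \emph{strictly} stable MOTS, the principal eigenvalue of its MOTS stability operator is positive, so that operator is invertible; the implicit function theorem applied to the equation $\theta_+=0$ for normal graphs over $\S$ then yields, for $(\hat g,\hat K)$ sufficiently close to $(g,K)$, a MOTS $\hat\S$ that is $C^{2,\alpha}$-close to $\S$ (hence again strictly stable), bounding a region $\hat M\subset N$ diffeomorphic to $M$. The delicate point is that hypothesis (ii) must survive the perturbation, i.e.\ that no MOTS or MITS is created in $\hat M\setminus\hat\S$. I would argue this by contradiction and compactness: for data converging to $(g,K)$, a sequence of such surfaces cannot escape into the asymptotically flat end, since large coordinate spheres there have $\theta_+>0$ and $\theta_-<0$ and so serve as barriers; it cannot accumulate on $\hat\S$, since near $\hat\S$ strict stability provides a foliation by surfaces with $\theta_+>0$ (cf.\ \cite{AEM}); and passing to the outermost offending surface in each perturbed data set yields surfaces of uniformly bounded area (each enclosed by a fixed coordinate sphere), whence, by the MOTS curvature estimates available for $n\le7$, a subsequence converges to a MOTS or MITS in $M\setminus\S$, contradicting (ii). I expect this persistence statement --- together with the bookkeeping needed to keep every weighted norm small without disturbing \eqref{eq:muJ-reg} --- to be the main obstacle, and it is exactly here that strict, rather than mere, stability of $\S$ is indispensable.

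Finally, $(\hat M,\hat g,\hat K)$ is asymptotically flat, satisfies the strict DEC, is bounded by the strictly stable MOTS $\hat\S$, and contains no other MOTS or MITS, so the Jang solution of conclusion~(\ref{point:jang2}) is provided by the existence theory with blow-up prescribed on the horizon (Theorem~\ref{thmexist}): one builds a $+\infty$ barrier near $\hat\S$ from the MOTS condition and a barrier near infinity from the coordinate spheres, solves on an exhaustion of $\hat M\setminus\hat\S$ by compact regions, and passes to the limit using a priori height and gradient estimates; the restriction $n\le7$ is used to exclude Simons-cone singularities in the limit, while the absence of interior MOTSs and MITSs forces $u$ to be regular on all of $\hat M\setminus\hat\S$. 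The weighted decay \eqref{eq:jang-decay} is then read off from Jang's equation at infinity: its inhomogeneous term $\tr_g K$ decays like the datum $K\in W^{k-1,p}_{-q-1}$, so $u$ behaves to leading order like a solution of $\Delta u=O(r^{-q-1})$, giving $u=O(r^{1-q})$ with the matching derivative bounds, i.e.\ $u\in W^{k+1,p}_{1-q}$.
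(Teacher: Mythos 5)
Your overall architecture differs from the paper's in a crucial way: you propose to (a) find the perturbed MOTS $\hat\S$ by an implicit function theorem argument and then (b) prove that hypothesis (ii) \emph{persists} for the perturbed data, so that Theorem~\ref{thmexist} forces the Jang solution to blow up only at $\hat\S$. Step (b) is where the proposal has a genuine gap. Your compactness/contradiction argument needs two things it does not have. First, curvature and area estimates for the ``offending'' surfaces: arbitrary MOTSs/MITSs carry no such estimates; they are available only for surfaces that are stable or enjoy Eichmair's $C$-minimizing property, and your ``outermost offending surface'' device does not obviously supply either in the mixed MOTS/MITS setting. Second, and more fatally, even granting convergence, the limit surface need not lie in $M\setminus\S$: it can coincide with $\S$ or merely \emph{touch} $\S$, in which case hypothesis (ii) is not contradicted. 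Your proposed exclusion of accumulation at $\hat\S$ via a foliation with $\theta_+>0$ fails on both counts there: the positivity of $\theta_+$ on the leaves degenerates linearly as the leaves approach $\S$, so it is not uniform against the $O(1/i)$ change of data, and --- decisively --- a $\theta_+$-barrier says nothing whatsoever about MITSs ($\theta_-=0$), so a sequence of MITSs for the perturbed data hugging $\S$ from outside is untouched by your argument, and the maximum principle cannot be applied between a MITS and the MOTS $\S$. This ``MITS touching $\S$'' configuration is exactly the delicate case in the actual proof, and your proposal never addresses it. (A smaller but real issue: Theorem~\ref{thmexist} requires an \emph{outer trapped} boundary, not a MOTS, so even after (a) you must again use strict stability to push $\hat\S$ out to outer trapped barriers \`a la Metzger.)

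The paper's route avoids proving persistence of (ii) altogether: strict stability is used to attach an exterior collar beyond $\partial M$ foliated by outer trapped surfaces ($\theta_+<0$), the data is deformed to strict DEC by citing the density theorem of \cite{EHLS} (rather than re-inverting the linearized constraint map, which is what your first step essentially redoes), Jang's equation is solved for each perturbed data set with the collar as interior barrier, and $\hat\S$ is \emph{defined} to be the blow-up surface $S_i=\partial\Omega_i^{ext}$. One then shows $S_i\to\S$: the maximum principle handles a MOTS component meeting $\S$, while the MITS-touching-$\S$ case is excluded not by barriers but by the uniform $C$-minimizing property of the capillarity-regularized Jang graphs (a tube-gluing violation), and Eichmair's compactness then gives that $S_i$ is a $C^{2,\alpha}$ graph over $\S$. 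In short, the missing ideas in your proposal are precisely the ones the paper's argument is built around: producing $\hat\S$ from the Jang blow-up itself, and using the $C$-minimizing structure to rule out MITSs degenerating onto $\S$, where $\theta_+$-comparison and the maximum principle are powerless. Your final points --- applying Theorem~\ref{thmexist} once the setup is in place and reading off the decay $u\in W^{k+1,p}_{1-q}$ from elliptic estimates and the barrier decay --- do match the paper.
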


\begin{rem}\label{rem:n=3a}
In the statement of Theorem~\ref{thmstrictDEC} we have excluded the
case $n=3$. The reason for this is that in the proof we are making use
of the density theorem \cite[Theorem~22]{EHLS}, which yields deformed
data $(\hat{g}, \hat{K}$) satisfying the strict dominant energy
condition, and with the same asymptotic behavior as $(g,K)$, in
particular $\tr^{\hat{g}} \hat{K} \in W^{k-1,p}_{-1-q}$, which in
case $n=3$ is in general incompatible with having a bounded solution
(near infinity) to Jang's equation.
This problem does not arise for $n \ge 4$ in which case 
$\tr^{\hat{g}} \hat{K} = O(|x|^{-\gamma})$ for some $\gamma > 2$. In 
Theorem~\ref{thmexist} below, which does not rely on the just mentioned 
density theorem, we have avoided this technical point by including the 
additional assumption \eqref{eq:trgK-decay}. In Remark~\ref{rem:n=3} 
below we describe the modifications necessary to prove 
Theorem~\ref{mainthm} in the case $n = 3$.
\end{rem}

The proof involves several elements. We begin with some comments about 
Jang's equation. Schoen and Yau \cite{SY2} studied in detail the existence 
and regularity of solutions to Jang's equation in their proof of the 
positive mass theorem in the general (not time-symmetric) case. They 
interpreted Jang's equation geometrically as a prescribed mean curvature 
equation, and discovered that the only possible obstruction to global 
existence are MOTSs in the initial data, where, in fact, the solution 
may have cylindrical blow-ups.

Given an initial data set $(M,g,K)$, consider graphs of functions 
$u: M \to \bbR$ in the initial data set $(\bar M, \bar g, \bar K)$ of 
one dimension higher, where $\bar M = M \times \bbR$, $\bar g = g + dt^2$, 
and $\bar K$ is the pullback of $K$ to $\bar M$ by the projection to $M$. 
Jang's equation may then be written as
\[
H(u) -\tr\bar K(u) = 0,
\]
where $H(u)$ is the mean curvature of $\operatorname{graph}(u)$, with
respect to the downward pointing normal,
in $(\bar M, \bar g)$ and $\tr\bar K(u)$ is the partial trace of 
$\bar K$ over the tangent spaces of $\operatorname{graph}(u)$.

The fundamental existence result of Schoen and Yau
\cite[Proposition~4]{SY2} for Jang's equation may now be applied. We
also rely on the work of Metzger \cite{Metzger} to allow for an
interior barrier, and the regularity theory (up to dimension 7) of
Eichmair \cite{Eichmair1,Eichmair2} (see also
\cite{EM,eichmair:2013CMaPh.319..575E}). This together yields the
following existence result for Jang's equation in our setting.

\begin{thm} \label{thmexist} 
Let $(M,g,K)$ be an initial data set of dimension $n$, $3 \leq n \leq 7$ 
which is asymptotically flat in the sense of definition \ref{def:AF}. 
In case $n=3$, we require that $\tr^g K$ satisfies the additional decay 
condition
\begin{equation} \label{eq:trgK-decay} 
\tr^g K = O(|x|^{-\gamma})
\end{equation}
for some $\gamma > 2$. Further, we assume that $M$ is a manifold with 
boundary, whose boundary $\S = \d M$ is a compact connected outer trapped 
surface.

Then there exist open pairwise disjoint sets $\Omega$, $\Omega_+$ and
$\Omega_-$, with $\Omega$ containing a neighborhood of infinity, and
an extended-real valued function $u$ whose domain includes the union
$\Omega\cup \Omega_+ \cup \Omega_-$ (and is realized as a limit of
solutions to the capillarity regularized Jang equation \eqref{eqjanqreg})
such that
\begin{enumerate}
\item
$M = \overline \Omega \cup \overline \Omega_+ \cup \overline \Omega_-$.
\item 
$u = +\infty$ on $\Omega_+$, where $\Omega_+$ contains a neighborhood of 
$\S$, and $u = -\infty$ on $\Omega_-$. 
\item 
Each boundary component $\S_a^+$ of $\Omega_+$ is a MOTS (except for $\S$), 
and each 
boundary component $\S_b^-$ of $\Omega_-$ is a MITS. (Here ``outside'' is 
determined by the outward normal to these open sets.)
\item 
$u: \Omega \to (-\infty,\infty)$ is a smooth solution to Jang's equation 
such that $u(x) \to 0$ as $x \to \infty$, $u(x) \to +\infty$ as 
$x \to \d \Omega_+$, and $u(x) \to - \infty$ as $x \to \d \Omega_-$. 
The boundary components of $\Omega$ are smooth and form a subcollection 
of the MOTSs $\S_a^+$ and MITSs $\S_b^-$ in point (3). 
\end{enumerate}
\end{thm}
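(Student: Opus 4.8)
The plan is to realize $u$ as a limit, as $\tau \to 0$, of solutions $u_\tau$ of the capillarity-regularized Jang equation \eqref{eqjanqreg}, following the scheme introduced by Schoen and Yau \cite{SY2} in the boundaryless case, together with the interior-barrier analysis of Metzger \cite{Metzger} and the higher-dimensional a priori and regularity estimates of Eichmair \cite{Eichmair1, Eichmair2}. First I would solve the regularized problem for each fixed $\tau \in (0,1]$: one seeks $u_\tau$ on $M$ with
\[
H(u_\tau) - \tr \bar K(u_\tau) = \tau u_\tau
\]
(the extra term $\tau u_\tau$, up to a positive factor, being exactly what restores the maximum principle), subject to $u_\tau \to 0$ at infinity. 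Existence for fixed $\tau$ follows from the continuity method once barriers are in place. At the asymptotically flat end, the fall-off in Definition~\ref{def:AF} — and, when $n=3$, the extra decay \eqref{eq:trgK-decay} on $\tr^g K$ — yields sub- and supersolutions tending to $0$, as in \cite[Proposition~4]{SY2} and \cite{Eichmair1}. At the inner boundary $\S$, the outer trapped condition $\theta_+ < 0$ provides a lower barrier on a collar of $\S$ that blows up to $+\infty$ along $\S$; this is where Metzger's construction \cite{Metzger} enters. A standard integration against test functions bounds $\tau \| u_\tau \|_{C^0}$ and shows $\tau u_\tau \to 0$ on compact sets as $\tau \to 0$.

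The core of the proof is the collection of $\tau$-independent estimates. One needs: (a) the interior gradient estimate for Jang's equation, so that on any compact $\Om' \subset M \setminus \S$ on which $|u_\tau|$ stays bounded, $\| u_\tau \|_{C^{2,\alpha}(\Om')}$ is bounded uniformly in $\tau$; and (b) control of the ``blow-up region'', i.e.\ that the super-level sets $\{ u_\tau > \Lambda \}$ and sub-level sets $\{ u_\tau < -\Lambda \}$ are, for $\Lambda$ large and $\tau$ small, bounded by hypersurfaces which converge, as $\tau \to 0$, in the sense of integral varifolds/currents, to vertical cylinders whose cross sections are marginally trapped. Both ingredients are supplied by \cite{SY2, Metzger, Eichmair1, Eichmair2}; the restriction $3 \le n \le 7$ is needed here, since in this range the almost area-minimizing limit hypersurfaces arising as cross sections of the blow-up cylinders are smooth, whereas for $n \ge 8$ one could only hope for a partial regularity statement, as the Simons cone shows.

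Passing to a subsequence $\tau_j \to 0$, set $\Om$ to be the open set of points near which $u_{\tau_j}$ stays locally bounded, and $\Om_\pm$ the open sets where $u_{\tau_j} \to \pm\infty$ locally uniformly; these are pairwise disjoint and $M = \ol\Om \cup \ol{\Om_+} \cup \ol{\Om_-}$. On $\Om$, estimate (a) produces a smooth limit $u$ solving Jang's equation, with $u \to 0$ at infinity (a neighborhood of infinity lies in $\Om$, since the end-barriers are uniform in $\tau$) and $u \to \pm\infty$ on approach to $\d\Om_\pm$. By (b), each boundary component of $\Om_+$ is a MOTS and each boundary component of $\Om_-$ is a MITS, with the outward normals of $\Om_\pm$ fixing the notions of outer and inner, and the boundary of $\Om$ is a subcollection of these. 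Finally, the inner lower barrier forces $u_{\tau_j} \to +\infty$ in a collar of $\S$, so $\Om_+$ contains a neighborhood of $\S$, and $\S$ occurs as a boundary component of $\Om_+$ — not itself a MOTS, but the prescribed inner trapped boundary.

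The step I expect to be the main obstacle is (b): showing, uniformly in $\tau$, that the blow-up regions are controlled and that their boundaries converge to \emph{smooth} marginally trapped hypersurfaces in the manifold-with-boundary setting. This rests on the geometric-measure-theoretic regularity theory for almost area-minimizing boundaries — the precise source of the $n \le 7$ hypothesis — and on verifying that the inner barrier $\S$ does not introduce new compactness difficulties, which is largely a matter of importing \cite{Metzger, Eichmair1, Eichmair2} and checking that their hypotheses hold in the present situation. The $n=3$ barrier construction under \eqref{eq:trgK-decay} is comparatively routine, but must be carried out with care to ensure $u \to 0$, rather than merely $u \to \text{const}$, at infinity.
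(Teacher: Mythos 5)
Your proposal is correct and follows essentially the same route as the paper's own (sketch of a) proof in Remark~\ref{remproof}: capillarity regularization \eqref{eqjanqreg}, asymptotic barriers from the decay assumptions (with \eqref{eq:trgK-decay} when $n=3$), Metzger's interior barrier at the outer trapped boundary, and Eichmair's almost-minimizing ($C$-minimizing) compactness and regularity theory to get smooth subsequential convergence of the graphs and the blow-up cylinders for $3\le n\le 7$, yielding the decomposition into $\Omega$, $\Omega_\pm$ with MOTS/MITS boundary components. The only cosmetic differences are that the paper phrases the uniform estimates entirely through the $C$-minimizing property of the graphs (via calibration) and obtains the MOTSs $\S_a^+$ and MITSs $\S_b^-$ as intersections of translation-limit cylinders with $M\times\{0\}$, rather than splitting the argument into an interior gradient estimate plus varifold convergence of level-set boundaries as you do.
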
 
\begin{rem}\label{remproof} 
To prove Theorem~\ref{thmexist} one considers the capillarity regularized Jang 
equation
\begin{equation} \label{eqjanqreg}
H(u_{\tau}) - {\tr}\bar K(u_{\t}) = \t u_{\t} ,
\end{equation}
and studies the limit as $\tau \to 0$. This regularized equation satisfies 
an a priori height estimate that allows one to construct a smooth global 
solution $u_{\tau}$ on $M \setminus \S$ such that $u_{\tau} \to 0$ on the 
asymptotically flat end (uniformly in $\t$), and $u_{\tau} \to \infty$ as 
$\t \to 0$ on a fixed neighborhood of $\S$; see 
\cite{SY2, AM2, Eichmair1, Metzger, EM}. To get smooth convergence up 
to dimension~7, one applies the method of regularity introduced in the 
study of MOTSs by Eichmair \cite{Eichmair1}, based on the $C$-minimizing 
property. By a calibration argument, the graphs 
$G_{\t} =\operatorname{graph}(u_{\t})$ obey the $C$-minimizing property 
(this is true in general for graphs of bounded mean curvature). By the 
compactness and regularity theory of $C$-minimizers as described in 
\cite[Appendix~A]{Eichmair1}, a subsequence of these graphs converges to 
a smooth hypersurface $G \subset \bar M$, consisting of cylindrical 
components (which occur at the intersection of $\d \Omega_+$ and 
$\d \Omega_-$, where $u$ is not defined) and graphical components which are also $C$-minimizing. 
It is this hypersurface $G$ that determines the open sets $\Omega$, 
$\Omega_+$ and $\Omega_-$. Considering translations of the graphical 
components of $G$ gives rise to further cylinders obeying the 
$C$-minimizing property. The collections of MOTSs $\Sigma_a^+$ and 
MITSs $\Sigma_b^-$ in part (3) of Theorem~\ref{thmexist} arise from 
the intersection of all these cylinders with $M = M \times \{0\}$
in $\bar M$. Of further importance to us, as observed in 
\cite{Eichmair1}, the $C$-minimizing property of these cylinders 
descends to the collection of MOTSs $\Sigma_a^+$ and MITSs $\Sigma_b^-$.
\end{rem} 

\subsection{Proof of Theorem~\ref{thmstrictDEC}} 

Since $\S$ is strictly stable in $(M,g,K)$, assumption (i) of the theorem enables us to construct an enlarged manifold $M' = M \cup V \subset N$, where $V$ is an exterior collar $V \approx [0, \e]
\times \S$ attached to $\Sigma = \partial M$, such that
$\S_t = \{t\} \times \S$ is outer trapped for all $t \in (0, \e]$
and $\S_0 = \S$, see Figure~\ref{figure1}. For more details, see the
discussion following Definition~3.1 in \cite{AEM}. Then $(M', g, K)$
is an asymptotically flat initial data set with boundary
$\d M' = \S' \coloneqq \S_{\e}$. On the asymptotically flat end we let
$\S(r)$ denote the radial sphere $|x| = r$.

\begin{figure}[h]
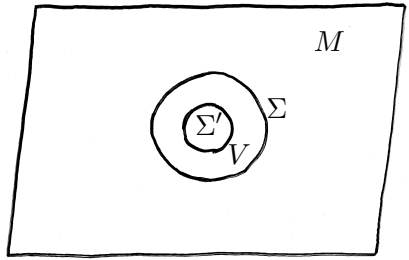

\begin{lpic}[]{FIG1(0.20)}
\lbl[]{230,160;{\small $M$}}
\lbl[]{170,85;{\small $V$}}
\lbl[]{195,115;{\small $\Sigma$}}
\lbl[]{150,105;{\small $\Sigma'$}}
\end{lpic}
\caption{$M' = M \cup V$.}
\label{figure1}
\end{figure}

\begin{lem} \label{perturb} 
With the assumptions of Theorem \ref{thmstrictDEC}, there exists a 
sequence of initial data sets $(M', g_i, K_i)$ such that
$(g_i, K_i)$ converges 
to $(g, K)$ in $W^{k, p}_{-q} \times W^{k-1, p}_{-1-q}$ and such that the 
following holds. 
\begin{enumerate}\renewcommand{\theenumi}{(\arabic{enumi})}
\renewcommand{\labelenumi}{\theenumi}
\item \label{part1}
The dominant energy condition holds strictly, $\mu_i > |J_i|_{g_i}$.
\item \label{part2}
For each data set $(M', g_i, K_i)$, $\S_t$ is outer trapped for all 
$\frac{1}{i} \le t \le \e$.
\item \label{part3}
There exists $r_0 >0$ such that for all $r \ge r_0$, $\S(r)$ is inner 
trapped ($\th_- < 0$) and outer untrapped ($\th_+ > 0$) with respect to 
each $(g_i, K_i)$.
\end{enumerate}
\end{lem}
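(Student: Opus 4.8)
The plan is to combine a global density-type perturbation that achieves the strict dominant energy condition with a fine local control near the collar and near infinity that preserves the trapping conditions. First I would invoke the density theorem \cite[Theorem~22]{EHLS} applied to the asymptotically flat initial data set $(M', g, K)$ (which has boundary $\S' = \S_\e$, where $\S'$ is outer trapped, hence in particular the relevant barrier hypotheses are met) to produce a sequence $(g_i, K_i) \to (g,K)$ in $W^{k,p}_{-q} \times W^{k-1,p}_{-1-q}$ with $\mu_i > |J_i|_{g_i}$ everywhere; this gives \ref{part1}. Because this convergence is in a norm that controls two derivatives of $g$ and one derivative of $K$ pointwise (by the Sobolev embedding $W^{k,p} \hookrightarrow C^{2,\alpha}$ for $p > n$, $k \ge 3$), the null expansions $\th_\pm$ computed with respect to $(g_i, K_i)$ on the fixed foliation $\{\S_t\}_{t\in[0,\e]}$ converge in $C^0$ to those computed with $(g,K)$, uniformly on the compact collar $\ol V$.

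Next I would extract \ref{part2}. By construction of the collar $V$ (following the discussion after Definition~3.1 in \cite{AEM}, using strict stability of $\S$), for the original data $(g,K)$ we have $\th_+(\S_t) < 0$ strictly for all $t \in (0,\e]$, and $\th_+(\S_0) = 0$ with $\th_+$ strictly decreasing in $t$ near $t=0$ (this is exactly where strict stability enters: the linearization of $\th_+$ at the MOTS $\S$ in the outward direction has a sign). For a fixed $\delta > 0$, uniform $C^0$-convergence of the expansions gives $\th_+^{(i)}(\S_t) < 0$ for all $t \in [\delta, \e]$ once $i$ is large. To reach all the way down to $t = 1/i$, one uses that on $[0,\delta]$ the quantity $\frac{d}{dt}\th_+^{(i)}(\S_t)$ is, for $i$ large, close to the corresponding derivative for $(g,K)$, which is bounded away from zero near $t=0$ by strict stability; since $\th_+^{(i)}(\S_0)$ is small (of order $\|(g_i,K_i)-(g,K)\|$), integrating shows $\th_+^{(i)}(\S_t) < 0$ for $t \ge c\,\|(g_i,K_i)-(g,K)\| $, and in particular for $t \ge 1/i$ once $i$ is large enough. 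This yields \ref{part2}, possibly after passing to a subsequence and relabeling.

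For \ref{part3} I would exploit the asymptotic flatness directly. For the flat metric $\delta$ with $K = 0$, the radial sphere $\S(r)$ has $\th_\pm = H = (n-1)/r > 0$ for the outward expansion and, with the sign convention $\th_\pm = P \pm H$, we need to check the ingoing one; more carefully, for genuinely asymptotically flat $(g,K)$ with the decay in Definition~\ref{def:AF} one has $H(\S(r)) = (n-1)/r + O(r^{-1-q})$ and $P(\S(r)) = \tr_{\S(r)} K = O(r^{-1-q})$ (indeed $o(r^{-1})$ since $q > (n-2)/2 \ge 1/2$), so $\th_+ = H + P > 0$ and $\th_- = P - H < 0$ for all $r \ge r_0$, with $r_0$ depending only on the asymptotic norms. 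Since these norms are controlled uniformly along the convergent sequence, the same $r_0$ works for all $(g_i, K_i)$ with $i$ large; shrinking to a tail of the sequence gives \ref{part3}.

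The main obstacle is \ref{part2}: ensuring the outer-trapped collar survives the perturbation \emph{all the way down to} $t = 1/i$, rather than merely on a fixed $[\delta,\e]$. This is precisely why the rate of perturbation must be coupled to $1/i$ and why strict stability (a strict sign for the linearized outgoing-expansion operator, equivalently a positive principal eigenvalue of the MOTS stability operator) is indispensable: it provides a quantitative lower bound on how fast $\th_+$ decreases as one moves outward from $\S$, which dominates the small perturbation of $\th_+$ at $\S$ itself. One must also be slightly careful that the density theorem of \cite{EHLS}, stated for manifolds without boundary or with a suitable end structure, applies in the bounded-boundary setting here; this is handled by the remarks in the excerpt on extending the weighted space formalism to manifolds with boundary, together with the fact that $\S' = \S_\e$ is outer trapped and therefore acts as a one-sided barrier that is compatible with the constructions in \cite{EHLS, Eichmair1, Metzger}.
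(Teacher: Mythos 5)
Your strategy coincides in outline with the paper's (density theorem for strict DEC, smallness of the perturbation on the collar to keep the $\Sigma_t$ outer trapped, decay estimates for the large coordinate spheres), and parts (2) and (3) of your argument are essentially sound. The one genuine gap is in how you invoke \cite[Theorem~22]{EHLS} for part (1). That theorem is a density/deformation result for asymptotically flat initial data on manifolds \emph{without} boundary; trapped-surface barriers are irrelevant to it, so the assertion that $\Sigma' = \Sigma_\e$ being outer trapped makes the theorem ``compatible'' with the bounded-boundary setting does not justify applying it directly on $M'$. The paper resolves this by compactly filling in $M'$ beyond its boundary (via the double) to get a complete boundaryless manifold $N'$, extending $(g,K)$ smoothly but arbitrarily to $N'$, and applying the density theorem there; since the DEC need not hold on the filled-in region, one also needs the refinement of \cite[Theorem~22]{EHLS} recorded in the paper's footnote, namely that strictness is achieved at points where the DEC holds for the original data --- which suffices because the DEC does hold on $M' \subset N$. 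Without some such extension step your part (1) is not established.

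On part (2) your route differs mildly from the paper's: you integrate a quantitative lower bound on $-\frac{d}{dt}\theta_+(\Sigma_t)$ near $t=0$ coming from strict stability, which forces you to compare the perturbation size with $1/i$ (and, as your own estimate shows, really requires the perturbation to be smaller than $1/i$ by a fixed factor, which you patch by relabeling/subsequences). The paper avoids this bookkeeping: strict stability is used only once, to build the collar with $\theta_+(\Sigma_t) < 0$ for all $t \in (0,\e]$ for the \emph{original} data, and then, for each fixed $i$, one observes that $\theta_+$ is bounded away from zero on the compact family $\{\Sigma_t\}_{t \in [1/i,\e]}$ and chooses the $i$-th perturbation sufficiently $C^2$-small on the collar $V$ (which \cite[Equation~(40)]{EHLS} permits) to preserve the sign there. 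Your part (3) is in substance the paper's argument, phrased via the uniform weighted bound $|\theta_{i\pm}(r) - \theta_\pm(r)| = O(r^{-q-1})$ against the $(n-1)/r$ falloff of the mean curvature of large spheres.
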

\begin{proof} 
The proof of Lemma~\ref{perturb} is based on \cite[Theorem~22]{EHLS}. 
By considering the double of $M'$, one sees that $M'$ can be compactly 
``filled-in'' beyond its boundary to obtain a complete manifold $N'$ 
without boundary. Extend the data $(g,K)$ arbitrarily, but smoothly 
to $N'$. Then $(N',g,K)$ is an asymptotically flat manifold such that 
dominant energy condition holds on $M' \subset N'$. Thus, by 
\cite[Theorem~22]{EHLS}\footnote{For our situation we are actually 
applying a small refinement of \cite[Theorem~22]{EHLS}, whereby one does not require the dominant energy condition to hold everywhere. Essentially, the construction in \cite{EHLS} gives strictness at points where the dominant energy condition holds in the original data. We thank Lan-Hsuan Huang \cite{Huang} for clarification on this point.}
and the remark following the statement of \cite[Theorem~18]{EHLS} 
concerning higher regularity, there exists a sequence of asymptotically 
flat initial data sets $(M', g_i, K_i)$ satisfying
\[
\|g-g_i\|_{W^{k, p}_{-q}} \leq \frac1{i} 
\qquad \text{and} \qquad 
\|K-K_{i}\|_{W^{k-1, p}_{-1-q} } \leq \frac1{i}
\]
such that part \ref{part1} of Lemma~\ref{perturb} holds. Moreover, as 
follows from \cite[Equation (40)]{EHLS}, for each $i$, $(g_i,K_i)$ can 
be made sufficiently $C^2$-close to $(g,K)$ on the collar $V$ so that 
part \ref{part2} holds.

Now consider the null mean curvatures 
$\theta_{\pm}(r)= \tr_{\Sigma(r)}K \pm H$ 
(resp., $\theta_{i\pm}(r)) $) of the coordinate spheres $\S(r)$ in the 
initial data set $(g,K)$ (resp., $(g_i,K_i)$). Since the null mean 
curvatures $\theta_{\pm}(r)$ are polynomials in $g$ and its first 
derivatives, and $K$ (and similarly for ${\theta_i}_{\pm}(r)$ with respect 
to $g_i$ and $K_i$) the weighted Sobolev embedding 
$W^{k, p}_{-q} \subset C^{m,\alpha}_{-q}$ provided 
$m+\alpha < k - \frac{n}{p}$, implies that
\[
\|{\theta_i}_{\pm}(r) - \theta_{\pm}(r)\|_{ C^{0,\alpha}_{-q-1}} 
\leq \frac{C}{i}
\]
for a constant $C$ independent of $i$. This implies that 
\[
|{\theta_i}_{\pm}(r) - \theta_{\pm}(r)| = O(r^{-q-1}) .
\]
Since the mean curvature of large spheres falls off linearly with the 
radius this implies that part \ref{part3} of Lemma~\ref{perturb} holds. 
This concludes the proof of the lemma.
\end{proof} 

We now apply Theorem~\ref{thmexist} to each initial data set 
$(M', g_i, K_i)$ guaranteed by Lemma~\ref{perturb}. Thus, for each $i$ 
there exist open sets $\Omega_i$, 
$\Omega_{i+}$ and $\Omega_{i-}$ and
an extended-real valued function $u_i$ as in the theorem. In particular, 
$u_i : \Omega_i \to \bbR$ is a smooth solution to Jang's equation. Let 
$\Omega_i^{ext}$ be the component of $\Omega_i$ containing the 
asymptotically flat end. We are primarily interested in the smooth 
solutions $u_i : \Omega_i^{ext} \to \bbR$. The boundary 
$S_i \coloneqq \d \Omega_i^{ext}$ consists of MOTSs $S_{i,a}^+$ and 
MITSs $S_{i,b}^-$. (Consistent with Theorem~\ref{thmexist}, the
``outside'' is determined by the normal pointing into $\Omega_i^{ext}$.)
Here we use indices $a,b$ to enumerate the MOTS and 
MITS components of $S_i$. We have $u_i \to +\infty$ on 
approach to the MOTSs $S_{i,a}^+$ and $u_i \to -\infty$ on approach to 
the MITSs $S_{i,b}^-$.

Let $\Omega_{i+}'$ be the component of $\Omega_{i+}$ containing $\S'$. 
From Theorem~\ref{thmexist}, $\Omega_{i+}' \ne \emptyset$ for all $i$. 
In fact, as in \cite{Metzger}, the maximum principle implies that 
$\Omega_{i+}' \supset V_i \coloneqq \cup_{t \in [\frac1{i}, \e]} \S_t$. This implies, 
in particular, that $S_i \ne \emptyset$ for all $i$. Moreover, by part 
\ref{part3} of Lemma~\ref{perturb} and the maximum principle, 
$S_i \subset M'(r_0)$, where $M'(r_0) \subset M'$ is the compact region 
bounded by $\S(r_0)$. By the convergence of the data $(g_i, K_i)$ to 
$(g,K)$ on $M'(r_0)$, the sequence $S_i$ obeys a uniform $C$-minimizing 
property, see Remark~\ref{remproof}. Hence, by the compactness theory 
presented in \cite{Eichmair1, Eichmair2} (which provides area bounds, 
curvature bounds and injectivity bounds), by passing to a subsequence if 
necessary, the sequence $S_i$ converges in $C^{2,\alpha}$
to $S$ which is a combination of 
MOTSs and MITSs in $(M,g,K)$. Note that no component of $S$ enters the 
collar region exterior to $M$ in $M'$ since $S_i \cap V_i = \emptyset$ 
for all $i$.

We shall now prove that $S=\Sigma$. By the above there is a unique
smallest collection $\mathring{S_i}$ of components of $S_i$
surrounding $\Sigma'$ in the sense that $\mathring{S_i}$ separates
$\Sigma'$ from infinity, and hence a unique smallest collection
$\mathring{S}$ of components of $S$ surrounding $\Sigma$. Since
$\mathring{S}$ consists of MOTSs and MITSs, and since our assumptions
exclude any MOTSs or MITSs in the exterior of $\Sigma$, it follows
that $S = \mathring{S}$ and that each component of $\mathring{S}$
meets $\Sigma$ at some point.

\begin{figure}[h]
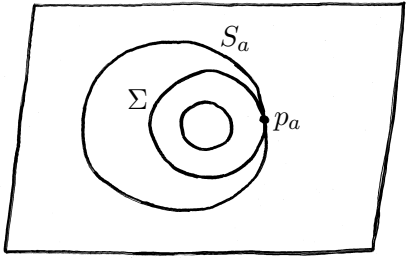

\begin{lpic}[]{FIG2(0.20)}
\lbl[]{105,120;{\small $\Sigma$}}
\lbl[]{205,105;{\small $p_a$}}
\lbl[]{170,160;{\small $S_a$}}
\end{lpic}
\caption{$S_a$ meets $\Sigma$ at $p_a$.}
\label{figure2}
\end{figure}

Let now $S_a$ be one of the components of $S$, which by the above must 
meet $\Sigma$ at some point $p_a$, see 
Figure~\ref{figure2}. Suppose that $S_a$ is a MOTS. Since the 
outward pointing normal of $S_a$ must agree with that of $\Sigma$, the 
maximum principle implies that in this case $S_a = \Sigma$ and we are done. 
It remains to consider the case when $S_a$ is a MITS. By construction, $S_a$ 
is the limit of a sequence $S^-_{i,a}$ of components of the boundary of 
$\Omega_i^{ext}$,  each of which is separated from $\Sigma_{1/i}$ by a part of 
$\Omega_{i-}$. Pulling back slightly from the limit, i.e.\ for very large $i$ and for values of $\tau$ sufficiently small, 
in a small neighborhood $U_a$ of $p_a$, the capillarity regularized Jang graph, ${\rm graph}\, u_{i,\tau}$, and its vertical translates come uniformly close to the vertical cylinders over the MITS $S^-_{i,a}$ and a MITS component $T_i$ of $\Omega_{i-}$ inside $S^-_{i,a}$, which converges to $\Sigma$ (see Figure~\ref{figure3}).   For $i$ very large and values of $\tau$ sufficiently small,  this can be seen to lead to a violation of the uniform (in both $i$ and 
$\tau$) $C$-minimizing property of these capillarity regularized Jang graphs  \cite{Eichmair1, Eichmair2} (for example by gluing in a tube.)
Thus, the case that $S_a$ is a MITS meeting $\Sigma$ is precluded by this $C$-minimizing property.  Hence we find that $S = \Sigma$.
(We remark that instead of considering the limits 
$\tau \searrow 0$ and $i \nearrow \infty$ separately, a simultaneous limit in $\tau, i$ can be taken. Arguing along the same lines as above, and making use of the uniformity of the C-minimizing property with respect to $\tau$, one concludes that the graphs $u_{\tau,i}$ have a smooth subsequential limit which satisfies the C-minimizing property. This again precludes the MITS 
$S_a$ meeting $\S$.)

\begin{figure}[h]
\begin{center}
\mbox{
\includegraphics[width=3.2in]{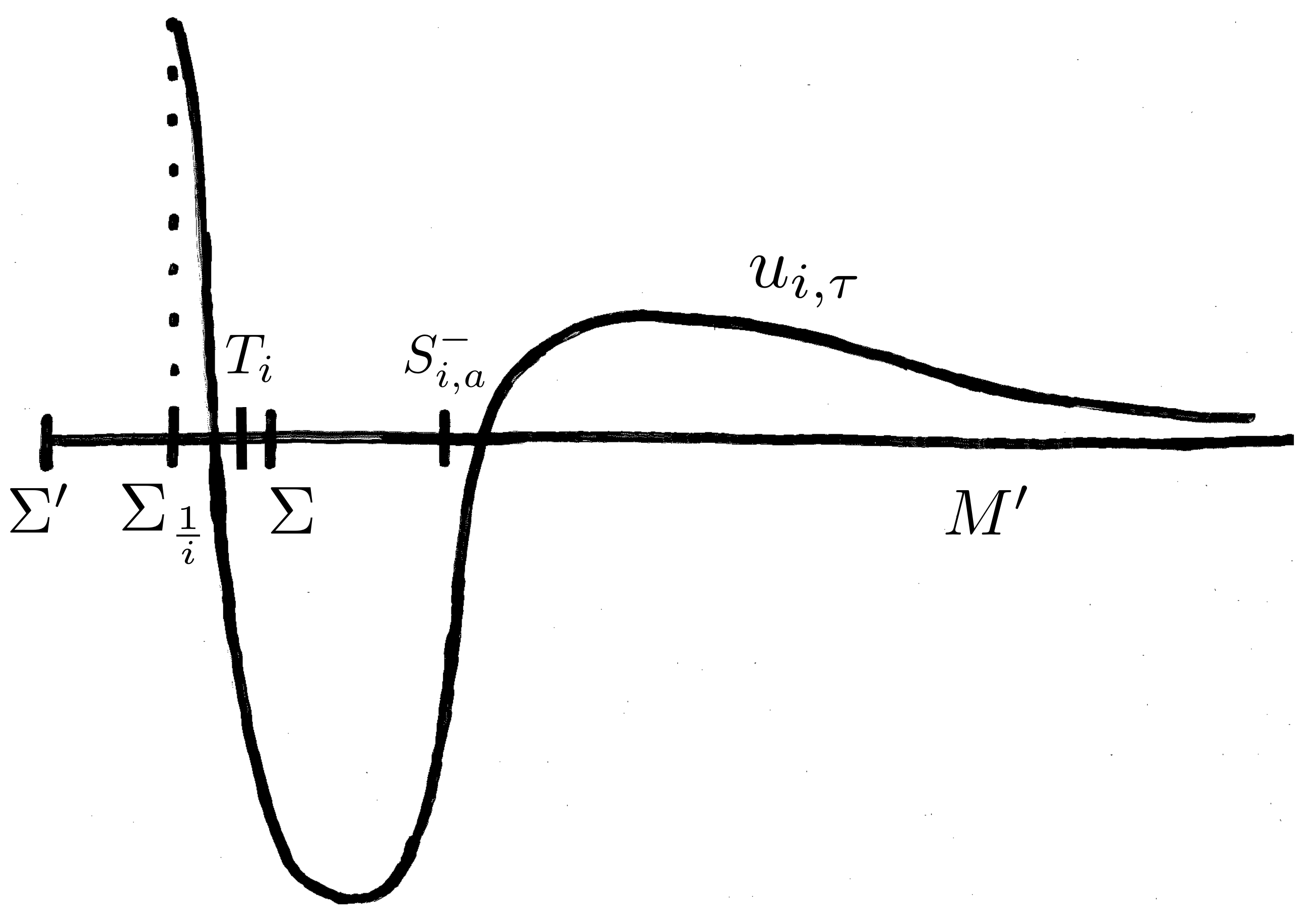}
}
\end{center}
\caption{The capillarity regularized Jang graph for $(M', g_i , K_i)$.}
\label{figure3}
\end{figure}

%

It follows from the above that for all sufficiently large $i$, $S_i$
must have only one component which is a MOTS surrounding $\Sigma'$. In
fact, by the conclusions of the compactness results for MOTS in \cite{Eichmair2} we know that, 
for all $i$ large, $S_i$ must
be a graph over $\Sigma$. To prove Theorem~\ref{thmstrictDEC} we set
$\hat M = \overline{\Omega_i^{ext}}$ and $\hat \S = S_i$ for $i$
sufficiently large. It remains to prove the regularity claimed in
point (\ref{point:jang2}). This follows by writing Jang's equation in
the form
\[
\Delta^g u 
= (1 + \nabla^k u \nabla_k u)^{1/2} g^{ij} K_{ij} 
+ (1 + \nabla^k u \nabla_k u)^{-1} \nabla^i u \nabla^j u \nabla_i \nabla_j u 
- \nabla^i u \nabla^j u K_{ij} , 
\] 
making use of the fact that the barrier argument used in constructing the 
solution to Jang's equation yields $u = O(|x|^{-\beta})$ for some 
$\beta > 0$ and using elliptic estimates. This completes the proof of 
Theorem~\ref{thmstrictDEC}.

\section{Proof of Theorem~\ref{mainthm}}

In this section we prove our main theorem. 
Assume that $(M,g,K)$ is an initial data set of dimension $n$ as in 
Theorem~\ref{mainthm}. For technical reasons we first restrict to the 
case $4 \leq n \leq 7$. The extension of the proof to the case $n=3$ is 
discussed in Remark~\ref{rem:n=3} below.

The proof is broken up into a number of steps. 

\smallskip
{\bf Step 1: Apply Theorem~\ref{thmstrictDEC} to deform to strict DEC.} 
By Theorem~\ref{thmstrictDEC}, $(M,g,K)$ may be deformed to an
initial data set satisfying the dominant energy condition with strict
inequality, $E\coloneqq \mu - |J| > 0$, while preserving the MOTS boundary,
and so that there is a solution $u$ to Jang's equation which blows up
at the MOTS boundary $\Sigma \coloneqq \partial M$ and has no further
blow-up. We denote the deformed data set again by $(M,g,K)$.

Let $\hat{M} \subset M \times \bbR$ be the graph of $u$, and let 
$\hat{g}$ be the induced Riemannian metric on $\hat{M}$. Then 
$(\hat{M}, \hat{g})$ is asymptotically flat and near $\Sigma$ it is 
asymptotic to the cylinder $(\Sigma \times \bbR, h + dt^2)$. From the 
Schoen-Yau identity 
\cite[Section~3.6]{AEM}
it follows that 
\begin{equation} \label{SYa}
\int_{\hat{M}} 
\left( 2 |\nabla \phi|_{\hat{g}}^2 + R^{\hat{g}} \phi^2 \right) 
d\mu^{\hat{g}}
\geq
\int_{\hat{M}} 2E \phi^2 d\mu^{\hat{g}}
\end{equation}
for every compactly supported smooth function $\phi$ on $\hat{M}$. 

\smallskip
{\bf Step 2: Deform the metric to exactly cylindrical ends.}
Near $\Sigma$ the Jang graph $(\hat{M},\hat{g})$ is asymptotic to the 
cylinder $(\Sigma \times \bbR, h + dt^2)$, where $h = g|_{\Sigma}$. 

We can write $\hat{M}$ as $\hat{M}_0 \cup \hat{M}_{\rm cyl}$ where 
$\hat{M}_{\rm cyl} = \Sigma \times [t_0,\infty)$. Using the normal 
exponential map of $\Sigma \times [t_0,\infty)$ in $M \times \bbR$ the 
asymptotically cylindrical end $(\hat{M}_{\rm cyl},\hat{g})$ can be
written as a graph of a function $U: \Sigma \times [t_0,\infty) \to \bbR$.
In \cite[Corollary~2]{SY2} it is proven that for every $\epsilon > 0$
there is a $t_{\epsilon} \geq t_0$ so that 
\[
|U(p,t)| + |\nabla^h U(p,t)| + |(\nabla^h)^2 U(p,t)| \leq \epsilon 
\]
for $p \in \Sigma$ and $t \geq t_{\epsilon}$. The result we refer to is 
stated in dimension $3$, but its proof holds in all dimensions.

By deforming the function $U$ to be identically zero for large $t$ we
can replace $\hat{g}$ by a metric, which we still denote by $\hat{g}$,
such that $\hat{g}= h + dt^2$ on $\Sigma \times [t_1,\infty)$, for
$t_1 > t_0$. Under the deformation of $U$ the inequality \eqref{SYa}
is almost preserved, so we get, for $t_0$ sufficiently large,
\begin{equation} \label{SYb}
\int_{\hat{M}} 
\left( (2+\epsilon) |\nabla \phi|_{\hat{g}}^2 + R^{\hat{g}} \phi^2 \right) 
d\mu^{\hat{g}}
\geq
\int_{\hat{M}} E \phi^2 d\mu^{\hat{g}}
\end{equation}
for some small $\epsilon > 0$ and all smooth compactly supported 
functions $\phi$.

{\em In each of the remaining steps the metric $\hat{g}$ is replaced
by a modified metric $\check{g}$, which is then renamed as $\hat{g}$.}

\smallskip
{\bf Step 3: Deform the metric to be flat on the asymptotically flat end.}
Working in the asymptotically flat end of $(\hat{M}, \hat{g})$, let $h$ 
denote the difference between $\hat{g}$ and the flat background metric 
$\delta$. By construction, the metric on the Jang graph is 
$\hat{g} = g + du^2$, where $u$ is the solution of Jang's equation which 
in our situation satisfies \eqref{eq:jang-decay}. 
It follows that if we write $\hat{g} = \delta + h$, then 
$h \in W^{k,p}_{-q}$. 

Let $\chi : \bbR \to \bbR $ be a smooth cut-off function such that 
$\chi(t) = 0$ for $t \leq 1$, $\chi(t) = 1$ for $t \geq 2$. For 
$(M,g)$ asymptotically flat we define $\chi_{\rho}(x) = \chi(r(x)/\rho)$ 
where $r$ is the Euclidean radial coordinate. This is defined for 
$\rho$ large and $x$ sufficiently far out in the asymptotically flat 
end, and then extended to all of $M$ by $\chi_{\rho} = 0$ inside the end. 
We have that $\nabla \chi_{\rho}$ is supported in the annulus 
$A_\rho \coloneqq \{ x \in M \mid \rho \leq r(x) \leq 2\rho \}$. Let 
$\One_{A_\rho}$ denote the characteristic function of $A_\rho$.

For $\rho$ large let 
\[
\check{g} \coloneqq \delta + (1-\chi_\rho) h = \hat{g} - \chi_\rho h .
\]
Then we have $\check{g} = \hat{g}$ for $r < \rho$ and 
$\check{g} = \delta$ for $r > 2\rho$, while in $A_\rho$ we have the 
estimates 
\[
\|\partial^m \check{g} \|_{C^0(A_\rho)} \leq C \rho^{-(n-2)/2-m}, 
\quad m=1, \dots, k-1 .
\]
In particular, we have that
\[
\| R^{\check{g}} \|_{C^0(A_\rho)} \leq C \rho^{-(n-2)/2-2} 
\]
and hence 
\[
\left| 
R^{\check{g}} - R^{\hat{g}} \frac{\sqrt{\hat{g}}}{\sqrt{\check{g}}} 
\right| 
\leq C \One_{\{ r \geq \rho\}} r^{-(n-2)/2-2}
\]
where $\frac{\sqrt{\hat{g}}}{\sqrt{\check{g}}}$ is the function such that
$d\mu^{\hat{g}} = \frac{\sqrt{\hat{g}}}{\sqrt{\check{g}}} d\mu^{\check{g}}$. 
From \eqref{SYb} we have
\begin{equation} \label{eq:Yfirst} 
\begin{split}
\int_{\hat M} 
\left( a_n |\nabla \phi|^2_{\check{g}} + R^{\check{g}} \phi^2 \right)
d\mu^{\check{g}} 
&\geq 
\int_{\hat M} \left(a_n - (2+\epsilon) \right) |\nabla\phi|^2_{\check{g}} 
d\mu^{\check{g}} \\
&\qquad 
+ (2+\epsilon) \int_{\hat M} 
\left( 
|\nabla \phi|^2_{\check{g}} 
- |\nabla \phi|^2_{\hat{g}} \frac{\sqrt{\hat{g}}}{\sqrt{\check{g}}}
\right) d\mu^{\check{g}} \\
&\qquad 
+ \int_{\hat M} 
\left( 
R^{\check{g}} - R^{\hat{g}} \frac{\sqrt{\hat{g}}}{\sqrt{\check{g}}}
\right) \phi^2
d\mu^{\check{g}} \\ 
&\qquad 
+ \int_{\hat M} E \frac{\sqrt{\hat{g}}}{\sqrt{\check{g}}}
\phi^2 d\mu^{\check{g}}  \, ,
\end{split}
\end{equation} 
where 
\[
a_n \coloneqq \frac{4(n-1)}{n-2} 
\]
is chosen so that the left hand side of \eqref{eq:Yfirst} is conformally 
invariant. The integrand in the second term on the right hand side of 
\eqref{eq:Yfirst} can be estimated in terms of 
$C \rho^{-(n-2)/2}|\nabla \phi|^2_{\check{g}}$. Using $a_n > 4$ we get
\begin{equation} \label{eq:Ysecond}
\int_{\hat M} 
\left(
a_n |\nabla \phi|^2_{\check{g}} + R^{\check{g}} \phi^2 
\right) d\mu^{\check{g}}
\geq
\int_{\hat M} 
\left( |\nabla\phi|^2_{\check{g}} + \hat{E} \phi^2 \right) 
d\mu^{\check{g}}
\end{equation}
for $\rho$ large enough, where 
\[
\hat{E} \coloneqq 
 E \frac{\sqrt{\hat{g}}}{\sqrt{\check{g}}}
+ 
\left( 
R^{\check{g}} - R^{\hat{g}} \frac{\sqrt{\hat{g}}}{\sqrt{\check{g}}}
\right).
\]
We note that $\hat{E}$ may be negative for $r \geq \rho$ due to the 
contribution from the difference of the scalar curvatures. By construction, 
we have
\[
E \leq Cr^{-n}
\]
for $r \geq R_0$, and 
\begin{equation} \label{eq:hatEbound}
|\hat{E}| \leq Cr^{-(n-2)/2-2}
\end{equation}
for some constant $C$. 

Next we will estimate the right hand side of \eqref{eq:Ysecond}
using a weighted Hardy inequality. Let $\chi_\rho$ be the smooth cut-off 
function on $\bbR^n$ defined by $\chi_\rho(x) \coloneqq \chi(|x|/\rho)$ 
where $\chi$ is the function introduced above. The following lemma
follows by the standard Hardy inequality (see for example Section~2.1.6.
of \cite{Mazja}) applied to $\chi_{\rho} u$, followed by an application
of the Cauchy-Schwartz inequality.
\begin{lem} \label{lem:Hardy-cutoff} 
There is a constant $C_n > 0$ depending only on $n$, so that 
\[
C_n \int_{\bbR^n} u^2 \chi_\rho^2 r^{-2} d\mu 
\leq 
\int_{\bbR^n} \left( 
\chi_\rho^2 |\nabla u|^2_\delta + \frac{1}{\rho^2} \One_{A_\rho} u^2 
\right) d\mu
\]
for all $u \in C^\infty_0(\bbR^n)$. 
\end{lem}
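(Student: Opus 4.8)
The plan is to reduce Lemma~\ref{lem:Hardy-cutoff} to the classical Hardy inequality on $\bbR^n$, which states that for $n \ge 3$,
\[
\left(\frac{n-2}{2}\right)^2 \int_{\bbR^n} \frac{v^2}{r^2}\, d\mu
\le \int_{\bbR^n} |\nabla v|^2_\delta\, d\mu
\qquad \text{for all } v \in C^\infty_0(\bbR^n),
\]
(see Section~2.1.6 of \cite{Mazja}). First I would apply this with $v = \chi_\rho u$. The left-hand side is then exactly $\left(\frac{n-2}{2}\right)^2 \int_{\bbR^n} \chi_\rho^2 u^2 r^{-2}\, d\mu$, which is (up to the constant we want to call $C_n$) the quantity on the left of the claimed inequality. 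The right-hand side is $\int_{\bbR^n} |\nabla(\chi_\rho u)|^2_\delta\, d\mu$, so the whole content of the lemma is to bound this cross-term-containing expression by the asserted combination of $\chi_\rho^2|\nabla u|^2$ and $\rho^{-2}\One_{A_\rho}u^2$.

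The key step is the expansion
\[
|\nabla(\chi_\rho u)|^2_\delta
= \chi_\rho^2 |\nabla u|^2_\delta
+ 2\chi_\rho u\, \langle \nabla\chi_\rho, \nabla u\rangle_\delta
+ u^2 |\nabla\chi_\rho|^2_\delta .
\]
For the middle term I would use the Cauchy--Schwarz inequality in the form
\[
2\chi_\rho u\, \langle \nabla\chi_\rho, \nabla u\rangle_\delta
\le \delta_0\, \chi_\rho^2 |\nabla u|^2_\delta
+ \frac{1}{\delta_0}\, u^2 |\nabla\chi_\rho|^2_\delta
\]
for a suitable $\delta_0 > 0$ (any fixed $\delta_0$, e.g.\ $\delta_0 = 1$, works; one absorbs the resulting constants into $C_n$). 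The last ingredient is the gradient bound on the cutoff: since $\chi_\rho(x) = \chi(|x|/\rho)$ with $\chi$ a fixed smooth function, the chain rule gives $|\nabla\chi_\rho|_\delta \le \|\chi'\|_\infty \rho^{-1}$, and $\nabla\chi_\rho$ is supported in the annulus $A_\rho = \{\rho \le |x| \le 2\rho\}$, so $|\nabla\chi_\rho|^2_\delta \le C\rho^{-2}\One_{A_\rho}$. Combining, $\int_{\bbR^n}|\nabla(\chi_\rho u)|^2_\delta\, d\mu \le C\int_{\bbR^n}\big(\chi_\rho^2|\nabla u|^2_\delta + \rho^{-2}\One_{A_\rho}u^2\big)\, d\mu$, and then the classical Hardy inequality applied to $\chi_\rho u$ yields the stated inequality with $C_n := \big(\tfrac{n-2}{2}\big)^2 / C$ (or one simply states it with an unspecified dimensional constant $C_n>0$, which is all that is needed downstream).

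There is no real obstacle here; this is a routine manipulation. The only point requiring a word of care is that the classical Hardy inequality requires the test function to be compactly supported, which $\chi_\rho u$ is whenever $u \in C^\infty_0(\bbR^n)$, so the hypothesis is met. One should also note that the constant $C_n$ depends only on $n$ (through the Hardy constant) and on the fixed choice of $\chi$, not on $\rho$ or $u$ — this uniformity in $\rho$ is exactly what is used when $\rho \to \infty$ in the surrounding argument, so it is worth emphasizing. The dimensional restriction $n \ge 3$ is automatic in the present setting.
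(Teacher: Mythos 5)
Your proposal is correct and follows the same route as the paper, which proves the lemma precisely by applying the standard Hardy inequality of \cite{Mazja} to $\chi_\rho u$ and then handling the cross term via Cauchy--Schwarz together with the bound $|\nabla\chi_\rho|^2_\delta \le C\rho^{-2}\One_{A_\rho}$. Your added remarks on the uniformity of $C_n$ in $\rho$ and on absorbing the Young-inequality constants into $C_n$ are exactly the points the surrounding argument relies on.
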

If $(M,g)$ is asymptotically flat we have
\[
\int_M \chi^2_\rho |\nabla u|^2_g d\mu^g 
\geq 
C \int_{\bbR^n} \chi^2_\rho |\nabla u|^2_\delta d\mu 
\]
for $\rho$ sufficiently large from the identification of the end with
$\bbR^n \setminus B$. Applying Lemma~\ref{lem:Hardy-cutoff} gives the
following corollary.
\begin{cor} \label{cor:Hardy-cutoff} 
Let $(M,g)$ be an asymptotically flat Riemannian manifold of dimension
$n$, as in Definition~\ref{def:AF} (with $K=0$). For $R_0$
sufficiently large and $\rho > R_0$, there is a constant $C_n > 0$
depending only on $n$ such that
\[
C_n \int_M \chi_\rho^2 r^{-2} u^2 d\mu^g 
\leq \int_M 
\left( |\nabla u|_g^2 + \frac{1}{\rho^2} \One_{A_{\rho}} u^2 \right) 
d\mu^{g}
\]
for all $u \in C^\infty_0(M)$.
\end{cor}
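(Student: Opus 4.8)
The plan is to derive Corollary~\ref{cor:Hardy-cutoff} from the Euclidean estimate of Lemma~\ref{lem:Hardy-cutoff} by a localization-and-comparison argument: for $\rho$ large, every integrand occurring in the inequality is supported in the asymptotically flat end, and there the metric $g$ is uniformly close to $\delta$, so the weighted Hardy inequality transfers from $(\bbR^n,\delta)$ to $(M,g)$ at the cost of a fixed dimensional constant.

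\textbf{Metric comparison on the end.} By Definition~\ref{def:AF}, $g-\delta\in W^{k,p}_{-q}(\bbR^n\setminus B)$ with $k\ge 3$, $p>n$, $q>(n-2)/2>0$, so the weighted Sobolev embedding $W^{k,p}_{-q}\subset C^{0}_{-q}$ (valid since $k-n/p>0$) gives $\sup_{\{r\ge R\}}|g-\delta|=O(R^{-q})\to 0$; consequently the inverse metric and the density $\sqrt{\det g}$ converge to their Euclidean values as well. I would therefore fix $R_0$ large enough that on $\{r\ge R_0\}$ one has, on covectors, $\tfrac12|\,\cdot\,|^2_\delta\le|\,\cdot\,|^2_g\le 2|\,\cdot\,|^2_\delta$ together with $\tfrac12\,d\mu^\delta\le d\mu^g\le 2\,d\mu^\delta$ (the factor $2$ is not important; enlarging $R_0$ makes these ratios as close to $1$ as one wishes).

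\textbf{Localization and application of the lemma.} Fix $u\in C^\infty_0(M)$ and $\rho>R_0$. Since $\chi_\rho$ vanishes on $\{r\le\rho\}$ and $A_\rho\subset\{r\ge\rho\}$, the functions $\chi_\rho^2 r^{-2}u^2$, $\chi_\rho^2|\nabla u|^2$ and $\One_{A_\rho}u^2$ are all supported in $\{r\ge\rho\}\subset\bbR^n\setminus B$. Choosing an auxiliary cutoff $\psi\in C^\infty(\bbR^n)$ with $\psi\equiv 1$ on $\{r\ge\rho\}$ and $\psi\equiv 0$ on $\{r\le\rho/2\}$, and setting $v\coloneqq\psi\cdot(u|_{\bbR^n\setminus B})$ extended by zero, one gets $v\in C^\infty_0(\bbR^n)$ agreeing with $u$ and with $\nabla u$ on $\{r>\rho\}$, so that $\chi_\rho^2 v^2=\chi_\rho^2 u^2$, $\chi_\rho^2|\nabla v|^2_\delta=\chi_\rho^2|\nabla u|^2_\delta$ and $\One_{A_\rho}v^2=\One_{A_\rho}u^2$ pointwise. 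Applying Lemma~\ref{lem:Hardy-cutoff} to $v$ then gives
\[
C_n\int_{\bbR^n}\chi_\rho^2 r^{-2}u^2\,d\mu^\delta\ \le\ \int_{\bbR^n}\Big(\chi_\rho^2|\nabla u|^2_\delta+\tfrac{1}{\rho^2}\One_{A_\rho}u^2\Big)\,d\mu^\delta .
\]
Every integrand here is supported where Step~1 applies, so, bounding the left side below by $\tfrac12 C_n\int_M\chi_\rho^2 r^{-2}u^2\,d\mu^g$ and the right side above by $4\big(\int_M|\nabla u|^2_g\,d\mu^g+\tfrac{1}{\rho^2}\int_{A_\rho}u^2\,d\mu^g\big)$ (using $0\le\chi_\rho\le 1$) and then renaming $C_n/8$ as $C_n$, one obtains the claimed inequality; the new constant still depends only on $n$, the metric entering only through the threshold $R_0$.

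I do not anticipate a substantive obstacle: the entire content of the corollary is in the Euclidean Lemma~\ref{lem:Hardy-cutoff} together with the fact that asymptotic flatness forces $g\to\delta$ at infinity. The only point calling for a little care is in the localization step, namely making sure the flat Hardy inequality is applied to a genuinely compactly supported smooth function on $\bbR^n$; this is the role of the auxiliary cutoff $\psi$, and since every term in Lemma~\ref{lem:Hardy-cutoff} already carries a factor $\chi_\rho^2$ or $\One_{A_\rho}$ supported in $\{r>\rho\}$ (up to a null set), replacing $u$ by $v$ changes nothing.
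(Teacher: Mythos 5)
Your proposal is correct and follows essentially the same route as the paper: transfer the Euclidean Lemma~\ref{lem:Hardy-cutoff} to $(M,g)$ using that $g-\delta\in W^{k,p}_{-q}$ makes the metric, gradient norm and volume form uniformly comparable to their flat counterparts on $\{r\ge R_0\}$, where all the integrands live. The auxiliary cutoff $\psi$ you introduce to get a genuine element of $C^\infty_0(\bbR^n)$ is a harmless technical detail the paper leaves implicit.
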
 

Fix some large $\rho_0$. Corollary~\ref{cor:Hardy-cutoff} with 
$\rho = \rho_0$ gives us the estimate 
\[
\int_{\hat M} 
\left(
\lambda |\nabla \phi|_{\check{g}}^2 + \hat{E} \phi^2 
\right) d\mu^{\check{g}}
\geq 
\int_{\hat M} \left( 
\lambda C_n \chi_{\rho_0}^2 r^{-2} + \hat{E} - \lambda \rho_0^{-2} \One_{A_{\rho_0}} 
\right) \phi^2 d\mu^{\check{g}}
\] 
for $\lambda > 0$. With $\lambda = \rho^{-1/4}$, where $\rho$ is the
parameter in the definition of $\check{g}$, the inequality
\[
\lambda C_n \chi^2_{\rho_0} r^{-2} + \hat{E}
= \rho^{-1/4} C_n \chi^2_{\rho_0} r^{-2} + \hat{E} 
> 0 
\] 
holds trivially on the region inside $r = \rho$ (since $\check{g} = \hat{g}$ there)
and by \eqref{eq:hatEbound} it holds on the region $r \ge \rho$ if we choose $\rho$ 
sufficiently large. Similarly, keeping $\rho_0$ fixed and choosing $\rho$ sufficiently 
large, we have
\[
\hat{E} - \lambda \rho_0^{-2} \One_{A_{\rho_0}} =
\hat{E} - \rho^{-1/4} \rho_0^{-2} \One_{A_{\rho_0}} > 0
\]
in $A_{\rho_0}$, since then $\hat{E} = E$ on $A_{\rho_0}$ so $\hat{E}$ is positive and  
independent of $\rho$ there.

Redefining $\hat{E}$ as 
\[
\lambda C_n \chi_{\rho_0}^2 r^{-2} + \hat{E} - \lambda \rho_0^{-2} \One_{A_{\rho_0}} 
= \rho^{-1/4} C_n \chi_{\rho_0}^2 r^{-2} + \hat{E} - \rho^{-1/4} \rho_0^{-2} \One_{A_{\rho_0}}
\] 
we have that $\hat{E} > 0$ by the above choices. Redefining 
$\hat{g}$ as $\check{g}$ we get from \eqref{eq:Ysecond} that
\begin{equation}\label{eq:SYc}
\int_{\hat M} 
\left( a_n |\nabla \phi|^2_{\hat{g}} + R^{\hat{g}} \phi^2 \right) 
d\mu^{\hat{g}} 
\geq \int \hat{E} \phi^2 d\mu^{\hat{g}}
\end{equation}
for all compactly supported smooth functions $\phi$. By construction 
we now have that $\hat{g}$ is flat on the asymptotically flat end and
\[
\hat{E} \geq C r^{-2}
\]
for $r$ large.

{\bf Step 4: Conformal compactification of the asymptotically flat end.} 
The flat background metric $\delta$ compactifies to the standard round 
metric on $S^n$ by the conformal change
\[
\left( \frac{2}{1+r^2} \right)^{2} \delta 
= \left( \frac{2}{1+r^2} \right)^{2} (dr^2 + r^2 \sigma)
= d\theta^2 + \sin^2 \theta \sigma
\]
where $2r/(1+r^2) = \sin \theta$ and $\sigma$ is the round metric on 
$S^{n-1}$. Define 
\[
\alpha \coloneqq \left( \frac{2}{1+r^2} \right)^{(n-2)/2} 
\]
in coordinates on the asymptotically flat end and extend $\alpha$ to a
positive function on all of $\hat{M}$ with $\alpha=1$ on the 
cylindrical end. Set
$\check{g} = \alpha^{4/(n-2)} \hat{g}$ and let $\check{M}$ be $\hat{M}$ 
with the asymptotically flat end compactified by adding a point
$P_\infty$ at infinity. Then $(\check{M}, \check{g})$ is isometric to 
the standard round metric on $S^n$ in a neighbourhood of the new point 
at infinity.

The conformal Laplacians
$L^{\hat{g}} = a_n \Delta^{\hat{g}} + R^{\hat{g}}$ and 
$L^{\check{g}} = a_n \Delta^{\check{g}} + R^{\check{g}}$ are 
related by
\[
L^{\hat{g}} \phi = \alpha^{\frac{n+2}{n-2}} L^{\check{g}} ( \alpha^{-1} \phi ).
\]
Further, we have 
\[
d\mu^{\hat{g}} = \alpha^{-\frac{2n}{n-2}} d\mu^{\check{g}}.
\]
These equations, together with inequality \eqref{eq:SYc} gives us
\[\begin{split} 
\int_{\hat{M}} 
(\alpha^{-1} \phi) L^{\check{g}} (\alpha^{-1} \phi )
d\mu^{\check{g}} 
&=
\int_{\hat{M}} \phi L^{\hat{g}} \phi d\mu^{\hat{g}} \\
&=
\int_{\hat{M}} 
\left( a_n |\nabla \phi|_{\hat{g}}^2 + R^{\hat{g}} \phi^2 \right) 
d\mu^{\hat{g}} \\
&\geq
\int_{\hat{M}} \hat{E} \phi^2 d\mu^{\hat{g}} \\
&=
\int_{\hat{M}} \hat{E} \alpha^{-4/(n-2)}
(\alpha^{-1} \phi)^2 d\mu^{\check{g}}.
\end{split}\]
So with $\check{E} \coloneqq \hat{E} \alpha^{-4/(n-2)}$ it holds that 
\[
\begin{split}
\int_{\check{M}} 
\left( a_n |\nabla \phi|_{\check{g}}^2 + R^{\check{g}} \phi^2 \right) 
d\mu^{\check{g}}
&\geq
\int_{\check{M}} \hat{E} \alpha^{-4/(n-2)} \phi^2 d\mu^{\check{g}} \\
&=
\int_{\check{M}} \check{E} \phi^2 d\mu^{\check{g}}
\end{split}
\]
for all smooth compactly supported functions $\phi$ on $\check{M}$
whose support does not contain $P_\infty$.

In terms of the spherical radial coordinate $\theta$ at $P_\infty$, we 
have that $\hat{E} = O(\theta^2)$ and $\alpha^{-4/(n-2)} = O(\theta^{-4})$. 
This means that $\check{E} = O(\theta^{-2})$ which is compatible with the 
fact that we made use of the Hardy inequality in the construction of 
$\hat{E}$. We can now modify $\check{E}$ by decreasing its values in a 
neighborhood of the point at infinity, and thereby replace it by a 
bounded smooth function which is uniformly positive on $\check{M}$.
We finally get the inequality 
\begin{equation} \label{eq:SYd}
\int_{\check{M}} 
\left( a_n |\nabla \phi|_{\check{g}}^2 + R^{\check{g}} \phi^2 \right) 
d\mu^{\check{g}}
\geq 
\int_{\check{M}} \check{E} \phi^2 d\mu^{\check{g}}
\end{equation} 
for $\phi \in C^\infty_0(\check{M} \setminus P_\infty)$. 
A cut-off function argument shows that \eqref{eq:SYd} is valid for 
$\phi \in C^\infty_0(\check{M})$.

We redefine $(\hat{M},\hat{g})$ and $\hat{E}$ as 
$(\check{M}, \check{g})$ and $\check{E}$. This is then a metric with the 
asymptotically flat end compactified by a point, and an exact cylindrical 
end, such that
\begin{equation} \label{eq:SYe}
\int_{\hat{M}} 
\left( a_n |\nabla \phi|_{\hat{g}}^2 + R^{\hat{g}} \phi^2 \right) 
d\mu^{\hat{g}}
\geq 
\int_{\hat{M}} \hat{E} \phi^2 d\mu^{\hat{g}}
\end{equation} 
for $\phi \in C^\infty_0(\hat{M})$. 
 
\smallskip
{\bf Step 5: Conformal change to positive scalar curvature on the 
cylindrical end.} The next two steps are motivated by results in \cite{CM14}.

The estimate \eqref{eq:SYe} holds in particular for functions of compact 
support on the cylindrical end $(\Sigma \times [0,\infty),h+dt^2)$ 
of $(\hat{M},\hat{g})$. On the cylinder we have that $\hat{E}$ is larger 
than a constant $C$, so
\begin{equation} \label{SYf}
\int_{\Sigma \times \bbR} 
\left( a_n |\nabla \phi|_{h+dt^2}^2 + R^{h} \phi^2 \right) 
d\mu^{h+dt^2}
\geq
C \int_{\Sigma \times \bbR} \phi^2 d\mu^{h + dt^2}
\end{equation}
for functions $\phi$ with compact support on $\Sigma \times \bbR$. 
Let $\chi(t)$ be smooth compactly supported function on $\bbR$ with 
\[
\int_{\bbR} (\chi(t))^2 \,dt = 1, 
\qquad
a_n \int_{\bbR} (\chi'(t))^2 \,dt \leq \frac{C}{2},
\]
and let $v$ be any smooth function on $\Sigma$. If we set $\phi = \chi v$
in \eqref{SYf} we get
\[ \begin{split}
&\int_{\Sigma} \int_{\bbR}
\left( a_n (\chi'(t))^2 v^2 +
a_n (\chi(t))^2|\nabla v|_{h}^2 + R^{h} (\chi(t))^2 v^2 \right) 
\, dtd\mu^{h} \\
&\quad \geq
C \int_{\Sigma}\int_{\bbR} (\chi(t))^2 v^2 \, dt d\mu^{h}, 
\end{split} \]
which by the properties of $\chi(t)$ gives us 
\[
\int_{\Sigma} 
\left( a_n |\nabla v|_{h}^2 + R^{h} v^2 \right) 
d\mu^{h}
\geq
\frac{C}{2} \int_{\Sigma} v^2 d\mu^{h}.
\]
This means that the operator 
${\mathcal L}^h \coloneqq - a_n \Delta^h + R^{h}$ on $\Sigma$
has a spectrum consisting only of positive eigenvalues. 

Let $v_0$ be a positive eigenfunction corresponding to the smallest 
eigenvalue $\mu_0$ of ${\mathcal L}^h$, that is 
${\mathcal L}^h v_0 = \mu_0 v_0$. Let $v$ be a positive function on 
$\hat{M}$ which is equal to $v_0$ on $\Sigma \times [0,\infty)$ and set
\[
\check{g} \coloneqq v^{4/(n-2)} \hat{g}.
\]
Then $R^{\check{g}} = v^{-\frac{n+2}{n-2}} L^{\hat{g}} v$, so on 
$\Sigma \times [0,\infty)$ we have
\[
R^{\check{g}} = v_0^{-\frac{n+2}{n-2}} L^{\hat{g}} v_0
= v_0^{-\frac{n+2}{n-2}} {\mathcal L}^h v_0
= \mu_0 v_0^{-4/(n-2)} \geq c > 0.
\]
The estimate \eqref{eq:SYe} and conformal invariance gives us 
\[ 
\int_{\hat{M}} 
\left( a_n |\nabla \phi|_{\check{g}}^2 + R^{\check{g}} \phi^2 \right) 
d\mu^{\check{g}}
\geq
\int_{\hat{M}} \hat{E} v^{-4/(n-2)} \phi^2 d\mu^{\check{g}}. 
\]
We redefine $\hat{g}$ as $\check{g}$ and $\hat{E}$ as $\hat{E} v^{-4/(n-2)}$.
Then $(\hat{M},\hat{g})$ has positive scalar curvature on the cylindrical
end, it satisfies all the properties from step 4, and 
\begin{equation} \label{eq:SYh}
\int_{\hat{M}} 
\left( a_n |\nabla \phi|_{\hat{g}}^2 + R^{\hat{g}} \phi^2 \right) 
d\mu^{\hat{g}}
\geq
\int_{\hat{M}} \hat{E} \phi^2 d\mu^{\hat{g}} 
\end{equation}
for $\phi \in C^\infty_0(\hat{M})$.

\smallskip
{\bf Step 6: Conformal change to positive scalar curvature everywhere.}
We follow the argument in the proof of Proposition~4.6 in \cite{CM14}. 

We first prove that the conclusions of Lemma~4.5 in \cite{CM14} 
follow from \eqref{eq:SYh}. The first of these conclusions is that the 
$L^2$-spectrum of 
$L^{\hat{g}} = - a_n \Delta^{\hat{g}} + R^{\hat{g}}$ is 
contained in $[0,\infty)$, which clearly follows from \eqref{eq:SYh}. 

The second conclusion is that $L^{\hat{g}}$ does not have zero as an
eigenvalue. Assume that $u$ is an $L^2$ function with $L^{\hat{g}}u = 0$. 
Since $R^{\hat{g}} \geq c > 0$ on the cylindrical end of 
$(\hat{M},\hat{g})$ the ``tangential part'' of the operator 
$L^{\hat{g}}$ on the cylindrical end has a spectrum consisting only of 
positive eigenvalues. Separation of variables tells us that the $L^2$ 
function $u$ must have exponential decay on the cylindrical end. Since 
$u$ has exponential decay we can multiply with cut-off functions, insert 
in \eqref{eq:SYh} and integrate by parts to conclude that $u=0$, 
since $\hat{E}$ is strictly positive. 

Let $\theta$ be a positive function which is equal to 
$R^{\hat{g}} $ outside a large compact set. We want to solve the 
equation
\[
L^{\hat{g}} f = \theta
\]
for a positive function $f$ which tends to $1$ on the cylindrical end.
Set $f = 1+\alpha$, so that 
\[
L^{\hat{g}} \alpha = L^{\hat{g}} (f-1) 
= \theta - R^{\hat{g}} =: \tilde{\theta}
\]
where $\tilde{\theta}$ has compact support. In the proof of
Proposition~4.6 in \cite{CM14} there is an argument using barrier
functions to show that there is a solution $\alpha$, such that 
$f = 1+\alpha > 0$.

Making a conformal change with $f$ we get the metric 
$f^{4/(n-2)} \hat{g}$ which has scalar curvature 
\[
R^{f^{4/(n-2)} \hat{g}} = f^{-\frac{n+2}{n-2}} L^{\hat{g}} f
= f^{-\frac{n+2}{n-2}} \theta > 0.
\]
This metric does not have an exact cylindrical end, but since 
$\alpha$ decays exponentially we can deform it to be zero outside a large 
compact set, so that $f = 1$ outside this compact set, while preserving 
positivity of scalar curvature.
Finally, we set $\check{g} = f^{4/(n-2)} \hat{g}$ with the modified 
function $f$, and we cut off the cylindrical end of $\hat{M}$ to get a 
manifold $(\check{M},\check{g})$ with boundary satisfying all the stated 
properties. 

This completes the proof of Theorem~\ref{mainthm} for the case 
$4 \leq n \leq 7$. The following remark deals with the case $n=3$. 

\begin{rem}\label{rem:n=3}
In the case $n=3$, the decay of the initial data $(g_i, K_i)$ provided by the
density theorem \cite[Theorem~22]{EHLS} is not compatible with solving
Jang's equation, due to the slow decay of the mean curvature 
$\tr^{g_i} {K_i}$. A cut-off argument similar to that used in
Step 3 above can be used to modify this data near infinity to get
$\tr^{g_i}{K_i} = O(|x|^{-\gamma})$ for $\gamma > 2$. The
modified data, however, fails to have strict DEC in a neighborhood of
infinity. Performing the cut-off at a sufficiently large radius, one
finds upon constructing a solution to Jang's equation for this modified
data, that the Hardy inequality argument used in Step 3 can be applied
again to recover the inequality \eqref{eq:SYc}. This approach allows
us to extend the result of Theorem~\ref{mainthm} to the case $n=3$. We
leave the details to the reader.
\end{rem}

\section{Obstructions to positive scalar curvature} 
\label{sec_PSC}

In this section we will discuss conclusions about the topology of manifold 
$M$ which can be drawn from the Theorem~\ref{mainthm}.

\subsection{Three dimensions}

First we consider the case when the dimension $n=3$. Assume that $M$
is a connected oriented $3$-manifold, with connected boundary,
satisfying the conclusion of Theorem~\ref{mainthm}. Then its boundary
$\d M$ is diffeomorphic to a $2$-sphere. From
\cite[Theorem~3.4]{RosenbergStolz01} we know that the positive scalar
curvature metric $h$ on $\partial M$ is isotopic to the standard
metric on $S^2$.
Using the isotopy of metrics we get a positive scalar curvature metric
on the cylinder $S^2 \times I$ if the interval $I$ is long enough (so that the 
isotopy is run through very slowly). Capping off the cylinder with the standard 
metric on the hemisphere we get a positive scalar curvature metric on 
the 3-ball $B$ which has $h$ as its induced metric on the boundary and is 
product near the boundary. By a classical result of Gromov
and Lawson \cite{GL}, and the positive resolution of the Poincar\'e
conjecture, $\tilde M = \check{M} \cup_{\partial M} B$ must be a
connected sum of spherical space forms (manifolds of the form
$S^3/\Gamma$, where $\Gamma$ is a finite group of isometries of $S^3$) 
and copies of $S^2 \times S^1$. We thus have the following.

\begin{prop}\label{3dtopo}
Let $(M,g,K)$ be a $3$-dimensional initial data set satisfying the
hypotheses of Theorem~\ref{mainthm}, and assume $M$ is
orientable. Then $M$ is diffeomorphic to $\bbR^3 \# N\setminus B$,
where $N$ is a connected sum (possibly empty) of spherical space forms
and copies of $S^2 \times S^1$, and $B$ is an open Euclidean ball.
\end{prop}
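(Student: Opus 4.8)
The plan is to combine Theorem~\ref{mainthm} with the structure theory for closed 3-manifolds admitting positive scalar curvature metrics, exactly as sketched in the paragraph preceding the statement. First, by Theorem~\ref{mainthm}, $\check M$ (the one-point compactification of $M$) carries a metric $\check g$ of positive scalar curvature which is a Riemannian product $h + dt^2$ in a collar of $\partial M$, where $h$ is a positive scalar curvature metric on the closed surface $\partial M$; since $M$ is orientable, connected, with connected boundary, $\partial M$ is a closed orientable surface of positive scalar curvature, hence $\partial M \approx S^2$. Next I would appeal to \cite[Theorem~3.4]{RosenbergStolz01} to conclude that $h$ lies in the same path component of the space of positive scalar curvature metrics on $S^2$ as the round metric; running this isotopy slowly along a long cylinder $S^2 \times [0,L]$ produces a positive scalar curvature metric on $S^2 \times [0,L]$ interpolating between (a product neighborhood of) $h$ and the round product, and capping with a round hemisphere gives a positive scalar curvature metric on the $3$-ball $B$ whose induced boundary metric is $h$ and which is a product near $\partial B$. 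Because $\check g$ is likewise a product near $\partial M$ with the \emph{same} cross-sectional metric $h$, the glued manifold $\tilde M \coloneqq \check M \cup_{\partial M} B$ inherits a smooth metric of positive scalar curvature.

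Now $\tilde M$ is a closed, connected, orientable $3$-manifold admitting a positive scalar curvature metric. By the Gromov--Lawson theorem \cite{GL} (together with the resolution of the Poincar\'e/elliptization conjectures, which removes the fake $3$-sphere and fake spherical space form ambiguity and is needed to identify the prime summands precisely), $\tilde M$ is a connected sum of spherical space forms $S^3/\Gamma$ and copies of $S^2 \times S^1$; write $\tilde M = S^3 \# N$, absorbing any genuine $S^3$ factor, so $N$ is such a connected sum (possibly $S^3$ itself, i.e.\ ``empty''). Finally I would unwind the construction: removing the interior of $B$ from $\tilde M$ recovers $\check M$, and removing the compactifying point $P_\infty$ (equivalently, removing an open Euclidean ball from the asymptotically flat end, which amounts to the connected sum with $\bbR^3$) recovers $M$. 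Tracking these two excisions through the connected sum decomposition gives $M \approx \bbR^3 \# N \setminus B$ with $B$ an open Euclidean ball, as claimed.

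The main obstacle — really the only nontrivial input beyond Theorem~\ref{mainthm} — is the gluing step: one must ensure that the positive scalar curvature metric on the $3$-ball $B$ matches the collar metric of $\check g$ on $\partial M$ to infinite order (or is literally a product with the same $h$), so that the union is smooth \emph{and} retains $R > 0$ across the seam. This is handled by the product-near-the-boundary conclusion of Theorem~\ref{mainthm} combined with the isotopy-of-metrics trick of \cite{GL}: stretching the isotopy over a sufficiently long cylinder keeps the $t$-derivatives small enough that positivity of scalar curvature is preserved throughout, and the two collars can be identified isometrically. The remaining steps (identification of $\partial M$ with $S^2$, invocation of \cite{RosenbergStolz01} and \cite{GL}, and the bookkeeping of the connected sum under the two excisions) are standard.
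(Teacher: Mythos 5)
Your proposal is correct and follows essentially the same route as the paper: identify $\partial M \approx S^2$, use \cite[Theorem~3.4]{RosenbergStolz01} to isotope the boundary metric to the round one and build a positive scalar curvature cap $B$ with product collar, glue to $\check M$ using the product structure from Theorem~\ref{mainthm}, and then invoke Gromov--Lawson \cite{GL} together with the Poincar\'e/elliptization results to decompose the closed manifold, finally removing $B$ and the point at infinity to recover $M$. Your added care about matching the collar metric of $\check g$ (which is conformal to a small perturbation of $h$ rather than $h$ itself) is a reasonable refinement of a point the paper treats implicitly, and causes no divergence in the argument.
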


A key assumption in Proposition \ref{3dtopo} is that there are no
MOTS/MITS in $M \setminus \d M$. In \cite{EGP13} it was shown that $M
\approx \bbR^3 \setminus B$, under the stronger assumption that there are
no {\it immersed} MOTS (as defined in \cite{EGP13}) in
$M \setminus \d M$. It remains an interesting open question whether
the same conclusion can be reached under the assumption of no MOTS/MITS.

\subsection{Index obstructions}

In general dimensions there are obstructions to the existence of
positive scalar curvature metrics coming from the index of 
Dirac operators on
the manifold. The setting which is relevant here is with a compact
spin manifold $M$ with boundary $\partial M$ and a given metric of
positive scalar curvature $h$ defined on the boundary. Actually, the
metric on the boundary is only required to have invertible Dirac 
operator, which holds also when $h$ is conformal to a positive scalar
curvature metric. The metric $h$
is then extended to a metric $g$ on $M$ with the only requirement that
$g$ is a product $h + dt^2$ in a neighbourhood of the boundary. Using
$g$ a Dirac operator on $M$ is defined. The index
$\operatorname{ind}(M,h)$ of this Dirac operator defines an element of
$K_{n}(C^*_r \pi_1(M))$, that is of the $K$-theory of the reduced
$C^*$-algebra $C^*_r \pi_1(M)$ of the fundamental group $\pi_1(M)$,
see for example \cite{Schick14} or \cite{PiazzaSchick14}. The index
depends only on the pair $(M,h)$ up to cobordism, meaning that if
$(M',h')$ is cobordant to $(M,h)$ through a manifold with corners, and
the induced cobordism from $\partial M$ to $\partial M'$ is equipped
with a positive scalar curvature metric which restricts to $h$,
resp. $h'$, then $\operatorname{ind}(M',h') =
\operatorname{ind}(M,h)$. Again, it is actually only required that
the induced bordism between the boundaries has invertible Dirac 
operator. If the metric $h$ can be extended to a
metric on $M$ which has positive scalar curvature and is a product
near the boundary, then $\operatorname{ind}(M,h) = 0$ by the
Schr\"odinger-Lichnerowicz formula.

Applied to our setting we thus have the index 
$\operatorname{ind}(\check{M},h) \in K_{n}(C^*_r \pi_1(M))$ defined 
for the compactified exterior $\check{M}$ of any strictly stable MOTS
in $M$, since by Theorem~\ref{GS-thm} the induced metric on the MOTS
boundary is conformal to a metric of positive scalar curvature. In
case the MOTS is outermost and the initial data set is spin and
satisfies the dominant energy condition we conclude from
Theorem~\ref{mainthm} that $\operatorname{ind}(\check{M},h)$ vanishes.
The conclusion is actually that the index with respect to the boundary 
metric of $\check{g}$ vanishes, but since this is conformal to a small
perturbation of $h$ the index is the same for the two boundary metrics.

\begin{prop}\label{indexobstr}
Let $(M,g,K)$ be an $n$-dimensional, $3 \leq n \leq 7$, initial data
set satisfying the hypotheses of Theorem~\ref{mainthm}. Assume that
$M$ is spin. Then $\operatorname{ind}(\check{M},h) = 0$ where
$h$ is the induced metric on $\d M$, and $\check{M}$ is the
exterior of this MOTS with the asymptotically flat end compactified by
a point.
\end{prop}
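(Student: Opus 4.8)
The proof of Proposition~\ref{indexobstr} is essentially an immediate consequence of Theorem~\ref{mainthm} together with the Schr\"odinger--Lichnerowicz formula, once the relevant index-theoretic framework recalled above is in place. The plan is to verify that the hypotheses of the main theorem put us in exactly the situation where the coarse index $\operatorname{ind}(\check M, h)$ is defined and where its vanishing criterion applies.

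\begin{proof}
By Theorem~\ref{GS-thm}, since $\S = \d M$ is a strictly stable MOTS in an initial data set satisfying the dominant energy condition, the induced metric $h$ on $\S$ is conformal to a metric of positive scalar curvature; in particular the Dirac operator on $(\S, h)$ is invertible, so the index $\operatorname{ind}(\check M, h) \in K_n(C^*_r \pi_1(M))$ is well defined (here we use that $\pi_1(\check M) = \pi_1(M)$, since $\check M$ is obtained from $M$ by the one-point compactification of an asymptotically flat end, which does not change the fundamental group in dimension $n \ge 3$). Now apply Theorem~\ref{mainthm}: there is a metric $\check g$ on $\check M$ with positive scalar curvature whose induced metric on $\d M$ is conformal to a small perturbation $\tilde h$ of $h$, and which is a Riemannian product $\tilde h + dt^2$ in a collar neighbourhood of $\d M$. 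Since $\tilde h$ is conformal to a positive scalar curvature metric, its Dirac operator is invertible, so $\operatorname{ind}(\check M, \tilde h)$ is defined, and because $\check g$ is a positive scalar curvature metric extending $\tilde h$ which is a product near the boundary, the Schr\"odinger--Lichnerowicz formula forces $\operatorname{ind}(\check M, \tilde h) = 0$. Finally, since $\tilde h$ is a small perturbation of $h$, the two boundary metrics are connected by a path of metrics with invertible Dirac operator, so the cobordism invariance of the index (applied to the cylinder $\d M \times [0,1]$ carrying such a path, or equivalently the continuity of the index under small perturbations that preserve invertibility of the boundary operator) gives $\operatorname{ind}(\check M, h) = \operatorname{ind}(\check M, \tilde h) = 0$.
\end{proof}

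The only point requiring any care is the last one, identifying $\operatorname{ind}(\check M, h)$ with $\operatorname{ind}(\check M, \tilde h)$: one must know that ``small perturbation'' in Theorem~\ref{mainthm} is small enough to stay within the set of boundary metrics with invertible Dirac operator (which is open), so that the straight-line homotopy between $h$ and $\tilde h$ provides an admissible bordism in the sense recalled before the statement. This is immediate from the openness of invertibility and the conformal invariance of invertibility of the Dirac operator, both of which are standard, so there is no real obstacle here.
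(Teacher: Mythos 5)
Your proof is correct and follows essentially the same route as the paper: Theorem~\ref{GS-thm} makes the index well defined, Theorem~\ref{mainthm} plus the Schr\"odinger--Lichnerowicz formula kills the index for the boundary metric of $\check g$, and invertibility (open under small perturbations and conformally invariant) identifies that index with $\operatorname{ind}(\check M, h)$. The only nitpick is that the product structure near the boundary is $h' + dt^2$ with $h'$ the induced boundary metric of $\check g$ (which is conformal to the small perturbation $\tilde h$), not $\tilde h + dt^2$ itself, but your closing remarks on conformal invariance of invertibility already cover this.
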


As a simple application, we have the following. 

\begin{prop}\label{sum}
Let $(M,g,K)$ be an initial data set satisfying the hypotheses of
Theorem~\ref{mainthm}, and assume $M$ is spin. Let $M'$ be a manifold
which can be expressed as a connected sum, $M' = M \# X$, where $X$ is
a closed spin manifold. Suppose there exists initial data on $M'$
satisfying the dominant energy condition and coinciding with $(g,K)$
near the boundary. Then either $\operatorname{ind}(X) = 0$ or there
exists a MOTS in the exterior of the boundary of $M'$.
\end{prop}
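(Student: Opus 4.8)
The plan is to argue by contradiction. Suppose $\operatorname{ind}(X) \neq 0$ and that there is no MOTS (nor MITS) in the exterior of $\partial M'$; I will contradict the index obstruction of Proposition~\ref{indexobstr}.

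The first step is to check that, under this assumption, the (asymptotically flat) initial data on $M'$ satisfies every hypothesis of Theorem~\ref{mainthm}. Since $M$ and $X$ are spin and the connected sum $M' = M \# X$ is performed at an interior point of $M$, away from $\partial M$ and from the asymptotically flat end, $M'$ is spin, asymptotically flat with the same end, and $\partial M' = \partial M = \Sigma$. Because the given data on $M'$ coincides with $(g,K)$ near $\Sigma$, the boundary remains a connected strictly stable MOTS, and the condition that the data extends with the DEC to a slightly larger manifold is inherited. Together with the contradiction hypothesis that $M'\setminus\partial M'$ contains no MOTS or MITS, this exhibits $(M',g',K')$ as a CDOC to which Theorem~\ref{mainthm}, and hence Proposition~\ref{indexobstr}, applies; thus $\operatorname{ind}(\check M', h) = 0$ in $K_n(C^*_r\pi_1(M'))$.

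Next I would identify $\check M'$ with $\check M \# X$ (the compactification point and the connected-sum region are disjoint) and use additivity of the index under connected sum. As $n \ge 3$, van Kampen gives $\pi_1(\check M \# X) \cong \pi_1(\check M) * \pi_1(X)$, with the natural inclusions $j\colon \pi_1(\check M)\hookrightarrow\pi_1(M')$, $i\colon\pi_1(X)\hookrightarrow\pi_1(M')$. Passing from $\check M \sqcup X$ to $\check M \# X$ by attaching a $1$-handle yields a cobordism with corners whose side is $\Sigma \times [0,1]$, carrying the product of $h$ (which has invertible Dirac operator, being conformal to a metric of positive scalar curvature by Theorem~\ref{GS-thm}) with an interval; the cobordism invariance of the index recalled in Section~\ref{sec_PSC} therefore gives $\operatorname{ind}(\check M \# X, h) = j_*\operatorname{ind}(\check M, h) + i_*\operatorname{ind}(X)$. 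Applying Proposition~\ref{indexobstr} to $(M,g,K)$ itself (spin, satisfying the hypotheses of Theorem~\ref{mainthm}) gives $\operatorname{ind}(\check M, h) = 0$, so this collapses to $\operatorname{ind}(\check M', h) = i_*\operatorname{ind}(X)$, and combining with the first step, $i_*\operatorname{ind}(X) = 0$. Finally, the inclusion of a free factor induces a split injection $i_*\colon K_n(C^*_r\pi_1(X)) \hookrightarrow K_n(C^*_r(\pi_1(\check M) * \pi_1(X)))$ — this follows from the Cuntz exact sequence for the $K$-theory of reduced free products, using that the unital inclusion $\mathbb{C}\hookrightarrow C^*_r\pi_1(\check M)$ is injective on $K$-theory. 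Hence $\operatorname{ind}(X) = 0$, contradicting our assumption, and the proposition follows.

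I expect the main obstacle to be the additivity formula in the second step: making it precise in the manifold-with-corners setting and squaring it with the exact formulation of cobordism invariance of $\operatorname{ind}(\,\cdot\,,h)$ in Section~\ref{sec_PSC} (in particular, that the two summands live in $K_n$ of different group $C^*$-algebras and are compared via the free-product inclusions). The verification that $(M',g',K')$ inherits all the hypotheses of Theorem~\ref{mainthm} is routine but must be done with care, and the free-product $K$-theory input is standard. One caveat on the statement: the argument produces a MOTS \emph{or} a MITS in the exterior of $\partial M'$; excluding a MITS is precisely the part of the CDOC hypothesis that Theorem~\ref{mainthm} uses, so the conclusion should be read with that understanding (equivalently, the MITS case is handled along the same lines via the remark following Theorem~\ref{mainthm} on collections of outermost MOTSs and MITSs).
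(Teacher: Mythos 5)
Your proposal is correct and takes essentially the same route as the paper: assume no trapped surface obstruction, apply Proposition~\ref{indexobstr} to both $(M,g,K)$ and the data on $M'$, and use cobordism invariance of the index together with the fact that a connected sum is cobordant to the disjoint union of its summands. The additional details you supply (the injectivity of $i_*$ via free-product $K$-theory and the MOTS/MITS caveat) simply make explicit points that the paper's brief proof leaves implicit.
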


\begin{proof}
We have that $\check{M'} = \check{M} \# X$, where $\check{M}$ is the
one point compactification of $M$. The metrics on the boundaries from
Theorem~\ref{mainthm} are both conformal to a small perturbation of
the boundary metric $h$. The proposition then follows from
Proposition~\ref{indexobstr}, together with the cobordism invariance of
the index and the fact that a connected sum is cobordant to the
disjoint union of the summands.
\end{proof}

For example, if, in the context of the Proposition \ref{sum}, $M$ is a
four manifold and $X$ is a K3 surface, then there must exist a MOTS in
the exterior of $M'$.

We now make use of Gromov and Lawson's notion of enlargeability, as
described in \cite{GL} and references therein.

\begin{prop}\label{sum2}
Let $(M,g,K)$ be an initial data set satisfying the hypotheses of
Theorem \ref{mainthm}, and assume $M$ is spin. Suppose $M$ can be
expressed as, $M = N \# X$, where $X$ is a closed manifold. Then $X$
is not enlargeable.
\end{prop}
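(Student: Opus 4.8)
The plan is to argue by contradiction: assuming $X$ is enlargeable, I would produce a closed enlargeable spin manifold carrying a metric of positive scalar curvature, contradicting the Gromov--Lawson theorem \cite{GL}. So suppose $X$ is enlargeable.

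First I would check that every manifold appearing in the argument is spin. Since $M = N\#X$ is spin and $H^2$ is unchanged by removing a ball in dimension $n\ge 3$, the classes $w_2(N)$ and $w_2(X)$ both vanish, so $N$ and $X$ are spin; in particular the notion of enlargeability for $X$ carries the usual positive scalar curvature obstruction. The compactification $\check M$ is obtained from the compact core of $M$ by capping the asymptotically flat end with a disk $D^n$; the relevant obstruction group $H^2(\check M, M;\bbZ/2)\cong\widetilde H^2(S^n;\bbZ/2)$ vanishes for $n\ge 3$, so $\check M$ is spin, and hence so is its double.

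Next I would apply Theorem~\ref{mainthm} to $(M,g,K)$, obtaining a metric $\check g$ of positive scalar curvature on $\check M$ which is a Riemannian product in a collar of $\partial M$ (the induced boundary metric is conformal to a small perturbation of $h$, but its precise form is irrelevant here). Because $\check g$ is a product near $\partial M$, the double $(D\check M, D\check g)$, with $D\check M \coloneqq \check M\cup_{\partial M}\check M$, carries a \emph{smooth} metric of positive scalar curvature. On the topological side, the connected sum $M = N\# X$ may be taken at an interior point, so it commutes with doubling and $D\check M$ is diffeomorphic to $D\check N \# X \# \overline{X}$; in particular $X$ is a connected summand of the closed spin manifold $D\check M$.

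Finally I would invoke two standard facts from \cite{GL}: the connected sum of an enlargeable manifold with an arbitrary closed manifold is enlargeable, so $D\check M$ is enlargeable; and a closed enlargeable spin manifold admits no metric of positive scalar curvature. The existence of $D\check g$ then yields a contradiction, and therefore $X$ is not enlargeable. The one step meriting a little care is the doubling: one must verify that the interior connected sum $M=N\#X$ really descends to a connected sum of the doubled manifolds, and that the product structure of $\check g$ near $\partial M$ makes $D\check g$ genuinely smooth. Both are routine, and I do not expect a genuine obstacle here; the substance of the proposition is entirely contained in Theorem~\ref{mainthm}.
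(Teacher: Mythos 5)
Your proposal is correct and follows essentially the same route as the paper's proof: both double $\check{M}$ using the product structure of $\check{g}$ near $\partial M$ furnished by Theorem~\ref{mainthm}, observe that $X$ survives as a connected summand of the resulting closed spin manifold, and derive a contradiction from the Gromov--Lawson facts that connected sums with enlargeable manifolds are enlargeable and that closed enlargeable spin manifolds admit no positive scalar curvature metric. The additional spin and smoothness-of-doubling checks you spell out are fine but do not change the argument.
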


\begin{proof}
Let $\check{P}$ denote the double of $\check{M} = \check{N} \# X$.
$\check{P}$ is the connected sum of a closed spin manifold and an
enlargeable manifold (namely $X$ or its copy in the double). But this
contradicts the fact, which is an immediate consequence of Theorem
\ref{mainthm}, that $\check{P}$ admits a metric of positive scalar
curvature.
\end{proof}

So, for example, as follows from results in \cite{GL}, if $(M,g,K)$ is
as in Proposition~\ref{sum2}, then $X$ could not be a torus, or, more
generally, a manifold that admits a metric of nonpositive sectional
curvature.

This suggests a relationship between the results we have established
here and a natural problem for the existence of metrics of positive
scalar curvature. 
\begin{ques} \label{question1}
Let $(M,g,K)$ be an initial data set satisfying the hypotheses of
Theorem \ref{mainthm}, with $\partial M$ connected and spherical. Does
the closed manifold $\bar{M}$, obtained by compactifying the end of
$M$ and attaching a standard ball to $\partial M$, admit a metric of
positive scalar curvature?
\end{ques}
One should compare this with Problem 6.1 in \cite{RosenbergStolz01}.
We should note however that in this context we have
\begin{prop} \label{partial-ans}
Let $(M,g,K)$ be an initial data set satisfying the hypotheses of
Theorem \ref{mainthm}, with $\partial M$ connected and spherical and
assume $M$ is spin. Then the closed manifold $\bar{M}$, obtained by
compactifying the end of $M$ and attaching a standard ball to
$\partial M$, is not enlargeable.
\end{prop}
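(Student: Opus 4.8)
The plan is to mimic the proof of Proposition~\ref{sum2}: double the compactified exterior $\check M$, realize that double as a connected sum of two copies of $\bar M$, and then play the positive scalar curvature of the double against enlargeability. Since it is not known whether $\bar M$ itself carries positive scalar curvature (that is precisely Question~\ref{question1}), the doubling device is what lets us extract the weaker conclusion of non-enlargeability.

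First I would invoke Theorem~\ref{mainthm} to equip $\check M$ with a metric $\check g$ of positive scalar curvature which, by conclusion~(2), is an exact Riemannian product $\bar h + dt^2$ in a collar of $\partial M$; in particular the cross-sectional metric $\bar h$ on $\partial M$ has positive scalar curvature. Let $\check P$ denote the double of $\check M$ along $\partial M$. Reflecting $\check g$ across $\partial M$ yields a metric on $\check P$ which is smooth across the gluing hypersurface precisely because of the product structure, and which has positive scalar curvature everywhere (equal to $R^{\bar h}>0$ along the gluing locus). Moreover $\check P$ is spin: $M$, hence $\check M$, is spin, $D^n$ is spin, and $S^{n-1}$ carries a unique spin structure, so $\bar M$ is spin and therefore so is $\check P$.

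Next I would record the topological identification. Since $\partial M\cong S^{n-1}$ and $\bar M=\check M\cup_{\partial M}D^n$, the manifold $\check M$ is diffeomorphic to $\bar M$ with an open ball removed; hence its double $\check P$ is obtained by gluing two copies of $\bar M\setminus\operatorname{int}D^n$ along their boundary spheres, which is exactly the connected sum construction. Thus $\check P$ is diffeomorphic to $\bar M\#\bar M$ (the second copy carrying the reversed orientation, as always for a double, a distinction immaterial here since enlargeability is insensitive to orientation). Consequently $\check P$ is a closed spin manifold admitting a metric of positive scalar curvature, so by Gromov and Lawson \cite{GL} it is not enlargeable. Finally, suppose for contradiction that $\bar M$ is enlargeable; then, by the stability of enlargeability under connected sum with an arbitrary closed manifold \cite{GL}, the connected sum $\bar M\#\bar M\cong\check P$ would be enlargeable, a contradiction. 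Hence $\bar M$ is not enlargeable.

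The only genuinely delicate point is the bookkeeping in the doubling/connected-sum identification together with the verification that the reflected metric is smooth with positive scalar curvature across the gluing hypersurface; both are handled by conclusion~(2) of Theorem~\ref{mainthm} and the positivity of $R^{\bar h}$. Everything else is a routine application of standard facts about enlargeable spin manifolds from \cite{GL}, exactly as in Propositions~\ref{sum} and \ref{sum2}.
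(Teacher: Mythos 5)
Your proposal is correct and follows essentially the same route as the paper: double $\check M$ using the product structure from Theorem \ref{mainthm}(2) to get a closed spin manifold of positive scalar curvature, identify this double with $\bar M \# \bar M$ via the spherical boundary, and contradict the Gromov--Lawson stability of enlargeability under connected sums. The extra details you supply (smoothness of the reflected metric, spin structure on the double, the orientation remark) are correct elaborations of what the paper leaves implicit.
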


\begin{proof}
Suppose $\bar{M}$ were enlargeable then, $\bar{M}\#\bar{M}$ would also
be enlargeable by \cite{GL}, but this is precisely the compactified
double of $\check{M}$ which we have shown admits a metric of positive
scalar curvature.
\end{proof}

\subsection{Minimal hypersurface obstructions}

For closed manifolds of dimension less than $8$ one finds obstructions
to positive scalar curvature by using the fact that area-minimizing
hypersurfaces of a manifold with positive scalar curvature also allow
positive scalar curvature metrics, see \cite{SYscalar2}. In this
context, Schoen and Yau introduce, for each $n$, a class of
$n$-manifolds $\mathcal{C}_n$, and prove that a closed orientable
manifold $M$ of dimension $n$, $3 \le n \le 7$, having positive scalar
curvature must belong to $\mathcal{C}_n$.

Similar classes of manifolds can be defined for compact manifolds $M$
with mean convex boundary, $H_{\d M} \ge 0$ (with respect to the
outward normal). Let $\mathcal{C}'_3$ be the class of compact
orientable $3$-manifolds with (possibly empty) mean convex boundary
such that for any finite covering manifold $\tilde M$ of $M$,
$\pi_1(\tilde M)$ contains no subgroup isomorphic to the fundamental
group of a compact surface of genus $\ge 1$. In general, we say that
an $n$-dimensional, $n \ge 4$, compact orientable manifold $M$ with
(possible empty) mean convex boundary is of class $\mathcal{C}'_n$ if
for any finite covering space of $M$, every nontrivial codimension one
homology class can be represented by an embedded compact hypersurface
of class $\mathcal{C}'_{n-1}$. Then, using the results in \cite{HS}
for compact $3$-manifolds with mean convex boundary, the proof of
Theorem 1 in \cite{SYscalar2} is easily modified to show that a
compact orientable manifold $M$ with mean convex boundary of dimension
$n$, $3 \le n \le 7$, having positive scalar curvature must belong to
$\mathcal{C}'_n$. This immediately yields the following.

\begin{prop}
Let $(M,g,K)$ be an $n$-dimensional initial data set, $3 \le n \le 7$,
satisfying the hypotheses of Theorem \ref{mainthm}, and assume $M$ is
orientable. Then the compactification $\check M$ belongs to the class
$\mathcal{C}'_n$.
\end{prop}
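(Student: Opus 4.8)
The plan is to combine Theorem~\ref{mainthm} with the minimal-hypersurface obstruction for manifolds with mean convex boundary, exactly as sketched in the paragraph preceding the statement. First I would invoke Theorem~\ref{mainthm} to produce on $\check M$ a metric $\check g$ of positive scalar curvature which, in a collar neighborhood of $\partial M$, is a Riemannian product $h_0 + dt^2$, where $h_0$ is conformal to a small perturbation of $h$. The key point is that a product metric near the boundary has totally geodesic boundary, so in particular $H_{\partial \check M} = 0 \ge 0$ with respect to the outward normal; thus $(\check M, \check g)$ is a compact orientable Riemannian manifold of dimension $n$, $3 \le n \le 7$, with positive scalar curvature and mean convex (indeed totally geodesic) boundary.

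Second, I would apply the structure result quoted just above the statement: by the modification of the proof of Theorem~1 of \cite{SYscalar2}, using \cite{HS} to handle the stable minimal hypersurface theory in the mean-convex-boundary setting in dimension $3$, a compact orientable manifold with mean convex boundary of dimension $3 \le n \le 7$ carrying a positive scalar curvature metric must lie in the class $\mathcal{C}'_n$. Since $\check M$ has been shown to carry such a metric, we conclude $\check M \in \mathcal{C}'_n$, which is exactly the assertion.

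The only point requiring a word of care is the inductive/covering nature of the class $\mathcal{C}'_n$: membership is a statement about all finite covers and about representatives of codimension-one homology classes therein. One should note that finite covers of a manifold with mean convex boundary again have mean convex boundary, and that the embedded area-minimizing hypersurfaces produced in each cover — whose existence and (for $n \le 7$) regularity up to and including the boundary behavior are governed by \cite{HS} and the higher-dimensional results cited — again have mean convex boundary, so the induction on dimension closes. Granting the stated modification of \cite{SYscalar2}, there is no further obstacle: the argument is a one-line application once Theorem~\ref{mainthm} has supplied the product (hence mean convex) boundary structure. Accordingly I would simply write: \emph{Apply Theorem~\ref{mainthm} to obtain a positive scalar curvature metric on $\check M$ which is a product near $\partial M$; in particular $\partial \check M$ is mean convex. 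By the extension of \cite[Theorem~1]{SYscalar2} to manifolds with mean convex boundary described above (using \cite{HS} for the three-dimensional case), $\check M$ belongs to $\mathcal{C}'_n$.}
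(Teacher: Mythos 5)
Your argument is exactly the paper's: Theorem~\ref{mainthm} supplies a positive scalar curvature metric on $\check M$ that is a product near $\partial M$, hence has minimal (in particular mean convex) boundary, and the stated extension of \cite[Theorem~1]{SYscalar2} to mean convex boundaries (via \cite{HS} in dimension $3$) then places $\check M$ in $\mathcal{C}'_n$. The paper treats this as an immediate consequence of the preceding discussion, so your write-up is correct and matches its intended proof.
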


\bibliographystyle{amsplain}
\bibliography{geomids}

\providecommand{\bysame}{\leavevmode\hbox to3em{\hrulefill}\thinspace}
\providecommand{\MR}{\relax\ifhmode\unskip\space\fi MR }
\providecommand{\MRhref}[2]{%
  \href{http://www.ams.org/mathscinet-getitem?mr=#1}{#2}
}
\providecommand{\href}[2]{#2}
\begin{thebibliography}{10}

\bibitem{Kunduri}
A.~Alaee, H.~K. Kunduri, and Eduardo Mart{\'{\i}}nez~P., \emph{Notes on maximal
  slices of five-dimensional black holes}, Classical Quantum Gravity
  \textbf{31} (2014), no.~5, 055004, 20.

\bibitem{AEM}
L.~Andersson, M.~Eichmair, and J.~Metzger, \emph{Jang's equation and its
  applications to marginally trapped surfaces}, Complex analysis and dynamical
  systems {IV}. {P}art 2, Contemp. Math., vol. 554, Amer. Math. Soc.,
  Providence, RI, 2011, \url{http://dx.doi.org/10.1090/conm/554/10958},
  pp.~13--45.

\bibitem{AM2}
L.~Andersson and J.~Metzger, \emph{The area of horizons and the trapped
  region}, Comm. Math. Phys. \textbf{290} (2009), no.~3, 941--972,
  \url{http://dx.doi.org/10.1007/s00220-008-0723-y}.

\bibitem{bartnik:mass}
R.~Bartnik, \emph{The mass of an asymptotically flat manifold}, Comm. Pure
  Appl. Math. \textbf{39} (1986), no.~5, 661--693.

\bibitem{Chru}
P.~T. Chru\'sciel, \emph{The geometry of black holes}, 2015, notes available at
  \url{http://homepage.univie.ac.at/piotr.chrusciel}.

\bibitem{CM14}
P.~T. Chru{\'s}ciel and R.~Mazzeo, \emph{Initial {D}ata {S}ets with {E}nds of
  {C}ylindrical {T}ype: I. the {L}ichnerowicz {E}quation}, Annales Henri
  Poincar{\'e} (2014), 1--36,
  \url{http://dx.doi.org/10.1007/s00023-014-0339-z}.

\bibitem{Eichmair1}
M.~Eichmair, \emph{The {P}lateau problem for marginally outer trapped
  surfaces}, J. Differential Geom. \textbf{83} (2009), no.~3, 551--583,
  \url{http://projecteuclid.org/euclid.jdg/1264601035}.

\bibitem{Eichmair2}
\bysame, \emph{Existence, regularity, and properties of generalized apparent
  horizons}, Comm. Math. Phys. \textbf{294} (2010), no.~3, 745--760,
  \url{http://dx.doi.org/10.1007/s00220-009-0970-6}.

\bibitem{eichmair:2013CMaPh.319..575E}
\bysame, \emph{The {J}ang equation reduction of the spacetime positive energy
  theorem in dimensions less than eight}, Comm. Math. Phys. \textbf{319}
  (2013), no.~3, 575--593.

\bibitem{EGP13}
M.~Eichmair, G.~J. Galloway, and D.~Pollack, \emph{Topological censorship from
  the initial data point of view}, J. Differential Geom. \textbf{95} (2013),
  no.~3, 389--405, \url{http://projecteuclid.org/euclid.jdg/1381931733}.

\bibitem{EHLS}
M.~Eichmair, L.-H. Huang, D.~A. Lee, and R.~Schoen, \emph{The spacetime
  positive mass theorem in dimensions less than eight}, 2011,
  \url{http://arxiv.org/abs/1110.2087v2}.

\bibitem{EM}
M.~Eichmair and J.~Metzger, \emph{Jenkins-{S}errin type results for the {J}ang
  equation}, \url{http://arxiv.org/abs/1205.4301}.

\bibitem{ER}
R.~Emparan and H.~S. Reall, \emph{A rotating black ring solution in five
  dimensions}, Phys. Rev. Lett. \textbf{88} (2002), no.~10, 101101, 4.

\bibitem{FSW}
J.~L. Friedman, K.~Schleich, and D.~M. Witt, \emph{Topological censorship},
  Phys. Rev. Lett. \textbf{71} (1993), no.~10, 1486--1489.

\bibitem{Gdoc}
G.~J. Galloway, \emph{On the topology of the domain of outer communication},
  Classical Quantum Gravity \textbf{12} (1995), no.~10, L99--L101.

\bibitem{G08}
\bysame, \emph{Rigidity of marginally trapped surfaces and the topology of
  black holes}, Comm. Anal. Geom. \textbf{16} (2008), no.~1, 217--229,
  \url{http://projecteuclid.org/euclid.cag/1213020543}.

\bibitem{GS06}
G.~J. Galloway and R.~Schoen, \emph{A generalization of {H}awking's black hole
  topology theorem to higher dimensions}, Comm. Math. Phys. \textbf{266}
  (2006), no.~2, 571--576, \url{http://dx.doi.org/10.1007/s00220-006-0019-z}.

\bibitem{Gannon}
D.~Gannon, \emph{Singularities in nonsimply connected space-times}, J.
  Mathematical Phys. \textbf{16} (1975), no.~12, 2364--2367.

\bibitem{GL}
M.~Gromov and H.~B. Lawson, Jr., \emph{Positive scalar curvature and the
  {D}irac operator on complete {R}iemannian manifolds}, Inst. Hautes \'Etudes
  Sci. Publ. Math. (1983), no.~58, 83--196 (1984).

\bibitem{HS}
J.~Hass and P.~Scott, \emph{The existence of least area surfaces in
  {$3$}-manifolds}, Trans. Amer. Math. Soc. \textbf{310} (1988), no.~1,
  87--114.

\bibitem{Huang}
L.-H. Huang, personal communication.

\bibitem{IMP}
J.~Isenberg, R.~Mazzeo, and D.~Pollack, \emph{On the topology of vacuum
  spacetimes}, Ann. Henri Poincar\'e \textbf{4} (2003), no.~2, 369--383.

\bibitem{Lee}
C.~W. Lee, \emph{A restriction on the topology of {C}auchy surfaces in general
  relativity}, Comm. Math. Phys. \textbf{51} (1976), no.~2, 157--162.

\bibitem{Mazja}
V.~G. Maz'ja, \emph{Sobolev spaces}, Springer Series in Soviet Mathematics,
  Springer-Verlag, Berlin, 1985, Translated from the Russian by T. O.
  Shaposhnikova.

\bibitem{Metzger}
J.~Metzger, \emph{Blowup of {J}ang's equation at outermost marginally trapped
  surfaces}, Comm. Math. Phys. \textbf{294} (2010), no.~1, 61--72,
  \url{http://dx.doi.org/10.1007/s00220-009-0934-x}.

\bibitem{PiazzaSchick14}
P.~Piazza and T.~Schick, \emph{Rho-classes, index theory and {S}tolz' positive
  scalar curvature sequence}, Journal of Topology \textbf{7} (2014), no.~4,
  965--1004.

\bibitem{RosenbergStolz01}
J.~Rosenberg and S.~Stolz, \emph{Metrics of positive scalar curvature and
  connections with surgery}, Surveys on surgery theory, {V}ol. 2, Ann. of Math.
  Stud., vol. 149, Princeton Univ. Press, Princeton, NJ, 2001, pp.~353--386.

\bibitem{Schick14}
T.~Schick, \emph{The topology of positive scalar curvature}, 2014,
  \url{http://arxiv.org/abs/1405.4220}.

\bibitem{SchWItt}
K.~Schleich and D.~Witt, \emph{Singularities from the topology and
  differentiable structure of asymptotically flat spacetimes}, 2010,
  \url{http://arxiv.org/abs/1006.2890}.

\bibitem{SYscalar}
R.~Schoen and S.-T. Yau, \emph{Existence of incompressible minimal surfaces and
  the topology of three-dimensional manifolds with nonnegative scalar
  curvature}, Ann. of Math. (2) \textbf{110} (1979), no.~1, 127--142.

\bibitem{SY1}
\bysame, \emph{On the proof of the positive mass conjecture in general
  relativity}, Comm. Math. Phys. \textbf{65} (1979), no.~1, 45--76.

\bibitem{SYscalar2}
\bysame, \emph{On the structure of manifolds with positive scalar curvature},
  Manuscripta Math. \textbf{28} (1979), no.~1-3, 159--183,
  \url{http://dx.doi.org/10.1007/BF01647970}.

\bibitem{SYPRL}
\bysame, \emph{Positivity of the total mass of a general space-time}, Phys.
  Rev. Lett. \textbf{43} (1979), no.~20, 1457--1459.

\bibitem{SY2}
\bysame, \emph{Proof of the positive mass theorem. {II}}, Comm. Math. Phys.
  \textbf{79} (1981), no.~2, 231--260,
  \url{http://projecteuclid.org/euclid.cmp/1103908964}.

\end{thebibliography}

\end{document}